 \newcommand{\hk}[1]{{\color{blue}{#1}}}
\newcommand{\lb}[1]{{\color{red}{#1}}}
 \newcommand{\SUS}{\subset}
\newcommand{\psum}[1]{\sum_{x\in L}{\vphantom{\sum}}'}
\newcolumntype{L}{>{\displaystyle} l <{}}
\newcounter{pcounter} 
\def\ED{


\newlength{\bibitemsep}\setlength{\bibitemsep}{.2\baselineskip plus .05\baselineskip minus .05\baselineskip}
\newlength{\bibparskip}\setlength{\bibparskip}{0pt}
\let\oldthebibliography\thebibliography
\renewcommand\thebibliography[1]{%
  \oldthebibliography{#1}%
  \setlength{\parskip}{\bibitemsep}%
  \setlength{\itemsep}{\bibparskip}%
}
\makeatletter \newcommand{\VEC}[2][r]{
  \gdef\@VORNE{1}
  \left(\hskip-\arraycolsep \begin{array}{#1}\vekSp@lten{#2}\end{array}
    \hskip-\arraycolsep\right) } \def\vekSp@lten#1{\xvekSp@lten#1;vekL@stLine;}
\def\vekL@stLine{vekL@stLine} \def\xvekSp@lten#1;{\def\temp{#1}%
  \ifx\temp\vekL@stLine \else \ifnum\@VORNE=1\gdef\@VORNE{0} \else\@arraycr\fi
  #1 \expandafter\xvekSp@lten \fi} \makeatother

\newcommand{\MAT}[1]{\begin{pmatrix}#1\end{pmatrix}}


\setlength{\marginparwidth}{2.6cm}

\renewcommand{\hks}[1]{{{\marginnote{\color{blue}{#1}}}}} %
\newcommand{\hkss}[1]{{\marginnote{\hk{#1}}} }
\newcommand{\lbss}[1]{{\marginnote{\lb{#1}}}}

\newcommand{\NP}[2]{\NNN{#1}{L^{#2}}} \newcommand{\NPL}[3]{\NNN{#1}{L^{#2}(#3)}}
\newcommand{\NI}[1]{\NNN{#1}{L^\infty}}
\newcommand{\NIL}[2]{\NNN{#1}{L^\infty(#2)}}
\newcommand{\MN}[2]{\NNN{#1}{L^{#2}}} %
\newcommand{\MNL}[3]{\NNN{#1}{L^{2,#2}(#3)}}

\newcommand{\ST}[2]{\skpL{#1}{#2}{L^2}}

\newtheorem{theorem}{Theorem}
\newtheorem{definition}[theorem]{Definition}

\newtheorem{proposition}[theorem]{Proposition}
 \newtheorem{lemma}[theorem]{Lemma}
\theoremstyle{definition} \newtheorem{remark}[theorem]{Remark}

\DeclareMathOperator{\argmin}{argmin} \DeclareMathOperator{\Tr}{Tr}
\DeclareMathOperator{\spt}{spt}
 \renewcommand{\t}{\tilde}
\renewcommand{\labelenumi}{\theenumi} \renewcommand{\theenumi}{(\roman{enumi})}
\newcommand{\digint}[2]{\ensuremath{\llbracket #1, #2 \rrbracket}}
\newcommand{\E}{{\mathbb E}} \newcommand{\todo}{\hk{todo}}
\newcommand{\divergence}{\mathop{\rm div}\nolimits}
\newcommand{\curl}{\mathop{\rm curl}\nolimits} \newcommand{\op}[1]{{\rm{#1}}}
\newcommand{\norm}[1]{\left\|{#1}\right\|} \newcommand{\de}{\partial}
\newcommand{\half}{\frac{1}{2}} \newcommand{\Om}{\Omega}
\newcommand{\two}{\rightharpoonup} \newcommand{\setmeno}{\!\setminus\!}
\newcommand{\into}{{\int_{\Omega}}} \newcommand{\LM}[2]{\hbox{\vrule width.4pt
    \vbox to#1pt{\vfill \hrule width#2pt height.4pt}}}
\newcommand{\LLL}{{\mathchoice
    {\>\LM{7}{5}\>}{\>\LM{7}{5}\>}{\,\LM{5}{3.5}\,}{\,\LM{3.35} {2.5}\,}}}

\numberwithin{equation}{section} \def\Xint#1{\mathchoice
  {\XXint\displaystyle\textstyle{#1}}%
  {\XXint\textstyle\scriptstyle{#1}}%
  {\XXint\scriptstyle\scriptscriptstyle{#1}}%
  {\XXint\scriptscriptstyle\scriptscriptstyle{#1}}%
  \!\int} \def\XXint#1#2#3{{\setbox0=\hbox{$#1{#2#3}{\int}$}
    \vcenter{\hbox{$#2#3$}}\kern-.5\wd0}} \def\dashint{\Xint-}
\def\ddashint{\Xint=} \allowdisplaybreaks

\renewcommand{\em}[1]{\underline{#1}}

\newcommand{\avsum}{\mathop{\mathpalette\avsuminner\relax}\displaylimits}

\usepackage{scalerel,stackengine} \stackMath \newcommand\widecheck[1]{%
  \savestack{\tmpbox}{\stretchto{%
      \scaleto{%
        \scalerel*[\widthof{\ensuremath{#1}}]{\kern-.6pt\bigwedge\kern-.6pt}%
        {\rule[-\textheight/2]{1ex}{\textheight}}
      }{\textheight}%
    }{0.5ex}}%
  \stackon[1pt]{#1}{\scalebox{-1}{\tmpbox}}%
}

\makeatletter \newcommand\avsuminner[2]{%
  {\sbox0{$\m@th#1\sum$}%
    \vphantom{\usebox0}%
    \ooalign{%
      \hidewidth \smash{\vrule height\dimexpr\ht0+1pt\relax
        depth\dimexpr\dp0+1pt\relax}%
      \hidewidth\cr $\m@th#1\sum$\cr }%
  }%
} \makeatother

\author[*]{\rm Laurent B\'{e}termin} \author[**]{\rm Hans Kn\"upfer}
\affil[*]{\normalsize{QMATH, Department of Mathematical Sciences, University of Copenhagen, Universitetsparken 5, DK-2100 Copenhagen \O, Denmark. \texttt{betermin@math.ku.dk}. ORCID id: 0000-0003-4070-3344}}
\affil[**]{Institute of Applied Mathematics and IWR, University of Heidelberg, Im Neuenheimer Feld 205, 69120 Heidelberg, Germany. \texttt{knuepfer@uni-heidelberg.de} }

\begin{document}

\title{Optimal lattice configurations for interacting spatially extended particles
} \date\today

\maketitle

\begin{abstract}
  We investigate lattice energies for radially symmetric, spatially extended particles interacting
  via a radial potential and arranged on the sites of a two-dimensional Bravais
  lattice. We show the global minimality of the triangular lattice among Bravais
  lattices of fixed density in two cases: In the first case, the distribution of
  mass is sufficiently concentrated around the lattice points, and the mass
  concentration depends on the density we have fixed. In the second case, both
  interacting potential and density of the distribution of mass are described by
  completely monotone functions in which case the optimality holds at any fixed
  density.
\end{abstract}
\noindent
\textbf{AMS Classification:}  Primary 74G65; Secondary 82B20.\\
\textbf{Keywords:} Calculus of variations; Lattice energy; Triangular lattice;
Crystal.
  \setcounter{tocdepth}{2}
\tableofcontents

\section{Introduction and main results}

One key objective in crystallization theory is to understand the optimal
arrangement of particles, interacting with some nonlocal interaction
potential. In particular, it has been shown that the triangular lattice is
optimal among Bravais lattices for a large class of interaction potentials if
one assumes that the particles are located on the lattice sites
\cite{Mont,NonnenVoros,AftBN,Sandier_Serfaty,CheOshita,Betermin:2014fy,BetTheta15}. The
idea of this paper is to generalize these results by considering the optimal
arrangement for particles which are not necessarily concentrated on a single
point, but may be spatially extended. We note that the question for optimal
arrangement of diffuse particles appears naturally in various systems in
condensed matter theory \cite{HeyesBranka} and quantum physics, including
e.g. Thomas-Fermi model \cite{BlancTFLattices} (where the electron density plays
the role of the diffuse particle), diblock copolymer systems in the low volume fraction limit
\cite{OhtaKawasaki} and magnetized disks interactions \cite{Disksinteractions}. For related mathematical analysis for these systems,
we refer refer e.g. to
\cite{Betermin:2014fy,ChoksiPeletier-2008,ChoksiPeletier-2009,KnuepferMuratovNovaga-2016}. We
  also note that systems with localizing and delocalizing interactions, although
  in a dynamic setting, are also relevant in biological models related to
  swarming and flocking, see e.g. \cite{BernoffTopaz-2011,
    TopazBertozzi-2004,BalagueCarrilloEtal-2013,Carrilonowinteraction} and the
  references therein. We show that for particles with radially symmetric mass
distribution, the triangular lattice is still optimal if the mass of each
particle is either sufficiently concentrated near its center, or if the mass
distribution is described by a completely monotone function.

\medskip

We consider a collection of identical particles with center located on the sites
of a two-dimensional Bravais lattice $L \SUS \R^2$. The mass distribution of the
particle at the lattice site $x \in L$ is given by $\mu(\cdot - x)$ for some
given probability measure $\mu \in \PP(\R^2)$. Summing the interaction energy
between different particles over all the lattice sites, we arrive at the
  total energy per particle of the lattice:
\begin{definition}[Lattice Energy] \label{def-energy} %
  For any Bravais lattice $L\subset \R^2$ and $\mu \in \PP(\R^2)$ , we define
  the interaction energy $\EE_{f,\mu}[L]$ by
  \begin{align} \label{energy} %
    \EE_{f,\mu}[L]:=\psum{x\in L} \iint_{\R^2\times \R^2} f(u-x-v) d\mu(u) d\mu(v)
    \quad \in [0,\infty].
  \end{align}
  The interaction potential $f: \R^2 \to \R$ is assumed to be in the class
  $\FF$, defined as the set of functions $f : \R^2 \to \R$ such that there
  exists $C>0$ and $\eta>0$ such that
  $|f(x)|+|\hat{f}(x)|\leq \frac{C}{(1+|x|)^{2+\eta}}$ for any $x\in \R^2$ and
  such that for some nonnegative Radon measure $\mu_f$, we have
  \begin{align} \label{first-con} %
    f(x)=\int_0^{\infty} e^{-|x|^2 t}d\mu_f(t) &&\text{for any $x\in \R^2$}.
    \end{align}
\end{definition}
Here and in the sequel, we use the notation
$\psum{x \in L} := \sum_{x \in L \BS \{ 0 \}}$. The Fourier transform $\hat f$
is defined in \eqref{def-fourier}. We also note that the condition
\eqref{first-con} is equivalent to the fact that $F : (0,\infty) \to \R$, given
by $F(|x|^2) := f(x)$ is completely monotone, i.e. $(-1)^k \p^k F(r) \geq 0$ for
all $k \in \N$ and $r> 0$. In particular, standard potentials such as the
Gaussian potential $g_\alpha(x):=e^{-\alp |x|^2}$, $\alpha>0$ and
$f_a(x)=(a+|x|^2)^{-s}$, $a>0$, $s>1$,
are included into the class of considered potentials (see also
\cite[Sec. 2.3]{BetTheta15} and \cite{ComplMonotonic}). We note that the
  energy can be in general infinite since the measure is not assumed to be
  absolutely continuous.

\medskip

In previous contributions, most results have been concerned with the
case when the particles are located on a single point, i.e. when $\mu$ is given
by a Dirac distribution. In this case, the optimality of the triangular lattice
\begin{align} \label{triangle} %
  \Lam:=\sqrt{\tfrac{2}{\sqrt{3}}}\left[ \Z \left( 1,0 \right) \oplus \Z
  \left(\tfrac{1}{2},\tfrac{\sqrt{3}}{2} \right) \right],
\end{align}
among the class of Bravais lattices with prescribed density has been established
for various interaction potentials, including the class $\FF$. One key result is
the seminal paper by Montgomery \cite{Mont} where the optimality of the
triangular lattice is shown for a class of Gaussian interaction potentials in
which case the lattice energy is described by the lattice theta function
$\theta_L$ given by \eqref{def-theta}. Furthermore, the proof of the minimality
of the triangular lattice for a class of Riesz potentials has been established
in a series of papers from number theorists \cite{Rankin,Cassels,Eno2,Diananda}
in their analysis of the Epstein zeta function.  Montgomery result for the
lattice theta function was used in order to prove the optimality of a triangular
lattice in several physical systems such as Ginzburg-Landau vortices or
Bose-Einstein Condensates in the periodic case
\cite{NonnenVoros,AftBN,NierTheta,Sandier_Serfaty,Betermin:2014rr}. In
\cite{Betermin:2014fy,BetTheta15,Beterloc,BeterminPetrache,Beterminlocal3d}, the
first author has studied this minimization problem among Bravais lattices for
more general potentials $f$. As explained in
  \cite[Sect. 2.5]{Blanc:2015yu}, the minimization of energies per point among
  Bravais lattices is related to the {\it crystallization conjecture}, once the
  crystallization -- i.e. the periodicity of the ground state for this system of
  interacting (extended) particles -- is assumed. Furthermore, even though this
  conjecture has not been proved so far, some interesting advances have been
achieved in the last decade for general configurations of particles in dimension
$d\in \{2,3\}$ (we refer e.g. to
\cite{Crystal,TheilFlatley,Suto1,Suto2,Stef1,Stef2,Luca:2016aa}).

\medskip

In our first result, we consider particles which are described by radially
  symmetric mass distributions $\mu \in \PP(\R^2)$.  Moreover, we define in Section \ref{subsec-local} a metric on the space of Bravais lattices with unit density. The open ball of radius $R$ and centred at $L$ is then denoted by $B_R(L)$. We show that the
triangular lattice is energetically optimal if the particles are sufficiently
concentrated:
\begin{theorem}[Radially symmetric mass distribution]\label{thm-1}
  Let $f\in \FF$, $\mu \in \PP(\R^2)$ be rotationally symmetric with respect to the origin and $\mu_\eps$
  for any measurable set $F \SUS \R^2$ be given by
  \begin{align} \label{def-mueps} %
    \mu_\eps(F) := \mu(\eps F).
  \end{align}
  Furthermore, suppose that $\EE_{f,\mu_\eps}[L] < \infty$ for any $L\in B_c(\Lam)$, for some $c>0$, and for all
  sufficiently small $\eps$, where the triangular lattice $\Lam$ is defined by
  \eqref{triangle}. Then there exists $\eps_0 = \eps_0(f,\mu)$ such that for any
  $0 \leq \eps < \eps_0$, $\Lam$ is the unique minimizer, up to rotation, of
  $\EE_{f,\mu_\eps}$ among all Bravais lattices of unit density.
\end{theorem}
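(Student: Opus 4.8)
I would treat $\EE_{f,\mu_\eps}$, as $\eps\to0$, as a perturbation of the point-particle energy $\EE_0[L]:=\EE_{f,\delta_0}[L]=\sum_{x\in L\setminus\{0\}}f(x)$ — for which $\Lam$ is already known to be the unique minimiser among unit-density lattices — and show that this minimum is \emph{non-degenerate}, so the perturbation cannot displace it. First I would rewrite the energy: since $\mu$, hence $\mu_\eps$, is rotationally symmetric, $\iint f(u-x-v)\,d\mu_\eps(u)\,d\mu_\eps(v)=g_\eps(x)$ with $g_\eps:=f*\mu_\eps*\mu_\eps$, and since $\iint f(u-v)\,d\mu_\eps\,d\mu_\eps$ does not depend on $L$, minimising $\EE_{f,\mu_\eps}$ over unit-density lattices is equivalent to minimising $L\mapsto\sum_{x\in L}g_\eps(x)$. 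As the Fourier transform of each Gaussian $e^{-t|\cdot|^2}$ is a positive Gaussian, $\hat f\ge0$ and in fact $\hat f(\xi)=\int_0^\infty e^{-s|\xi|^2}\,d\nu_f(s)$ for a finite nonnegative measure $\nu_f$, so $\widehat{g_\eps}=\hat f\,\widehat{\mu_\eps}^2$ with $0\le\widehat{g_\eps}\le\hat f$. Poisson summation then yields $\sum_{x\in L}g_\eps(x)=\sum_{\xi\in L^*}\hat f(\xi)\,\widehat{\mu_\eps}(\xi)^2$, the $\eps$-uniform upper bound $\EE_{f,\mu_\eps}[L]\le\EE_0[L]+f(0)$ (whence $\sup_\eps\EE_{f,\mu_\eps}[\Lam]<\infty$), and — using $\widehat{\mu_\eps}(\xi)=\widehat\mu(\eps\xi)\to1$ locally uniformly, $\mu_\eps*\mu_\eps\rightharpoonup\delta_0$, and smoothness of $f$ away from the origin — the convergence $\EE_{f,\mu_\eps}\to\EE_0$ in $C^2$ on compact subsets of the space $\LL_1$ of unit-density Bravais lattices (equivalently, after quotienting by $\SO(2)$, of the modular orbifold $\mathbb H/\mathrm{PSL}_2(\Z)$), the tails being controlled uniformly via $\widehat{g_\eps}\le\hat f$.

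\noindent\textbf{Non-degeneracy at $\Lam$.} The key structural input I would establish is $\Hess\,\EE_0(\Lam)=c_0\,\mathrm{Id}$ with $c_0>0$. Both $\EE_0$ and each $\EE_{f,\mu_\eps}$ are $\SO(2)$-invariant and smooth near $\Lam$, and $\Lam$ corresponds to the orbifold point of order three of $\mathbb H/\mathrm{PSL}_2(\Z)$; hence their gradient at $\Lam$ vanishes, and their Hessian there, being invariant under a $2\pi/3$-rotation of the tangent plane, is a scalar multiple of the identity. Writing (up to an additive constant) $\EE_0[L]=\int_0^\infty\theta_{L^*}(s)\,d\nu_f(s)$, $\theta_M(s)=\sum_{\xi\in M}e^{-s|\xi|^2}$, and invoking the known non-degenerate minimality of $L\mapsto\theta_{L^*}(s)$ at $\Lam$ for every $s>0$, one gets $c_0>0$. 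Consequently, on any fixed compact neighbourhood of $\Lam$ there are $\rho,\kappa>0$ with $\EE_0[L]-\EE_0[\Lam]\ge\tfrac{c_0}{4}\,d(L,\Lam)^2$ on $B_\rho(\Lam)$ and $\ge\kappa$ off $B_\rho(\Lam)$.

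\noindent\textbf{Excluding the cusp, and conclusion.} Next I would confine the minimiser. Fix $R>2$; continuity of $\widehat\mu$ at $0$ gives $\eps_1>0$ with $\widehat{\mu_\eps}(\xi)^2\ge\tfrac12$ for $|\xi|\le R$, $\eps<\eps_1$, so for such $\eps$, $\EE_{f,\mu_\eps}[L]+g_\eps(0)=\sum_{\xi\in L^*}\hat f(\xi)\widehat{\mu_\eps}(\xi)^2\ge\tfrac12\sum_{\xi\in L^*,\,|\xi|\le R}\hat f(\xi)$. Since $L$ can leave every compact subset of $\LL_1$ only with its shortest vector tending to $0$, and then so does that of $L^*$, arbitrarily many vectors of $L^*$ lie in $B_R$ with $\hat f$ bounded below; hence the right-hand side, and therefore $\EE_{f,\mu_\eps}[L]$ (as $g_\eps(0)\le f(0)$), tends to $\infty$, uniformly for $\eps<\eps_1$. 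With Mahler's compactness criterion and the bound $\EE_{f,\mu_\eps}[\Lam]\le\EE_0[\Lam]+f(0)$ this produces a fixed compact $K\ni\Lam$, $B_\rho(\Lam)\SUS K$, containing every minimiser of $\EE_{f,\mu_\eps}$ for all small $\eps$. On $K$ I would then combine: $C^2$-convergence makes $\Hess\,\EE_{f,\mu_\eps}\ge\tfrac{c_0}{2}\,\mathrm{Id}$ on $B_\rho(\Lam)$ for $\eps$ small, so $\EE_{f,\mu_\eps}$ is strictly convex there and, $\Lam$ being a critical point, is its unique critical point — hence unique minimiser — on $B_\rho(\Lam)$; while $C^0$-convergence gives $\EE_{f,\mu_\eps}[L]\ge\EE_0[L]-o(1)\ge\EE_0[\Lam]+\kappa-o(1)>\EE_{f,\mu_\eps}[\Lam]$ on $K\setminus B_\rho(\Lam)$. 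Thus for $\eps<\eps_0(f,\mu)$ the unique minimiser of $\EE_{f,\mu_\eps}$ over $\LL_1$ is $\Lam$, up to rotation.

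\noindent\textbf{Expected main obstacle.} I expect the hard part to be exactly the last step — passing from ``the minimiser lies near $\Lam$'' (a soft consequence of $C^0$-convergence and strict minimality of $\EE_0$ at $\Lam$) to ``the minimiser is $\Lam$''. This needs two non-soft ingredients: the non-degeneracy $c_0>0$, which does \emph{not} follow from strict minimality alone but must be read off from the second variation of the lattice theta function, with the order-three symmetry reducing the Hessian to a scalar; and the \emph{uniform} $C^2$-closeness of $\EE_{f,\mu_\eps}$ to $\EE_0$ near $\Lam$, which amounts to bounding two $L$-derivatives of the lattice sum uniformly in $\eps$ — here the regularity of $f$ away from the origin and the decay encoded in $\FF$ (controlling the lattice-sum tails uniformly) have to be used carefully. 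The uniform-in-$\eps$ coercivity at the cusp is comparatively routine.
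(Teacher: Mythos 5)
Your proposal is correct in substance and reaches the theorem by a perturbation scheme that shares the paper's essential inputs but organizes the key middle step differently. Like the paper, you pass via Poisson summation to the dual-side energy $\sum_{\xi\in L^*\setminus\{0\}}\hat f(\xi)\,\hat\mu(\eps\xi)^2$ (Proposition \ref{prp-epsfou}), you exclude degenerating lattices uniformly in $\eps$ (your termwise-positivity argument, using $\hat f>0$ on $B_R$, $\hat\mu(\eps\xi)^2\ge\tfrac12$ there for small $\eps$, and the proliferation of short dual vectors, is in fact cleaner than the Bessel-function estimates of Lemma \ref{lem-compactlattice}), and you use the same two theta-function facts: criticality of $\Lam$ forced by the order-three symmetry (Lemma \ref{lem-critictriang}) and positive definiteness of $D^2E_{G_t}[\Lam]$ for every $t>0$ (Proposition \ref{prop-locmin}, via Sarnak--Str\"ombergsson), integrated against the representing measure. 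Where you diverge is in how "minimizer near $\Lam$" is upgraded to "minimizer equal to $\Lam$": the paper never proves $C^2$-convergence of the energies on a neighbourhood; it instead mollifies $\mu_{\hat f}$ to build a completely monotone comparison energy $\t H_\eps=\Phi\,\t g_\eps^2$, proves the quantitative bounds \eqref{cl-1}--\eqref{cl-2} ($O(\eps^2)$ closeness and quadratic growth $\ge C\,d(L,\Lam)^2$ from Montgomery's global optimality plus nondegeneracy), deduces $d(L,\Lam)\le C\eps$ for any minimizer, and concludes with the separately proven uniform local minimality of Proposition \ref{prp-localmin}. You instead use the known global strict minimality of the point energy together with uniform $C^0$-convergence off a fixed ball, and strict convexity from uniform $C^2$-convergence plus criticality inside the ball. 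Your route is shorter and more standard and avoids the completely monotone approximation altogether; the paper's route localizes all derivative control in one proposition and yields the rate $d(L,\Lam)\le C\eps$ as a by-product.

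One caveat: the uniform $C^2$-closeness near $\Lam$, which you correctly identify as the hard point, is not a soft dominated-convergence statement. Differentiating $\hat\mu(\eps\xi)^2$ twice in the lattice parameters means differentiating the Hankel--Stieltjes representation under the integral (which implicitly demands moment-type control on $\mu$), and the decay of $\nabla\hat f$, $\nabla^2\hat f$ must be extracted from the Gaussian representation (e.g. $|\nabla\hat f(p)|\le C|p|^{-1}\hat f(p/\sqrt2)$), since the class $\FF$ only postulates decay of $\hat f$ itself. This is exactly the content of the Bessel-expansion computation by which the paper shows $T_{h_\eps}=T_{\hat f}+R_\eps$ with $R_\eps\to0$ and bounds $D^3E_{h_\eps}$ uniformly in Proposition \ref{prp-localmin}; so your plan goes through at the same level of rigour as the paper, but this step requires those careful estimates rather than a one-line appeal to dominated convergence.
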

Our next result shows that if the mass distribution of each particle can be
expressed by a completely monotone function in terms of the square root of the
radius, then the triangular lattice is again the global minimizer:
  \begin{theorem}[Completely monotone mass distribution] \label{thm-2} %
    Let $f \in \FF$ and suppose that $\mu \in \PP(\R^2)$ satisfies
    $d\mu(x)=\rho(x)dx$ for some $\rho:\R^2\to \R$ satisfying
  \begin{align} \label{first-con2} %
    \rho(x)=\int_0^{\infty} e^{-|x|^2 t}d\mu_\rho(t) &&\text{for any
                                                        $x\in \R^2$},
    \end{align}
    for some nonnegative Radon measure $\mu_\rho$.  Then the triangular
    lattice $\Lam$ is the unique minimizer, up to rotation, of $\EE_{f,\mu}$
        among all Bravais lattices of unit density.
\end{theorem}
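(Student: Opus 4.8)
The approach is to reduce, for arbitrary $f\in\FF$ and arbitrary admissible mass density $\rho$, the energy \eqref{energy} to a one‑parameter superposition of Gaussian lattice sums, and then to invoke Montgomery's theorem \cite{Mont} on the strict minimality of the triangular lattice $\Lam$ for the lattice theta function at every temperature. The first step absorbs the two integrations against $\mu$ into a convolution. Since $f\ge 0$ and $\rho\ge 0$ (both are Laplace transforms in the variable $|x|^2$ of nonnegative measures, by \eqref{first-con} and \eqref{first-con2}) and $\rho$ is radial, Tonelli's theorem and the substitution $u\mapsto u-v$ give, for every Bravais lattice $L$,
\[
\EE_{f,\mu}[L]=\psum{x\in L}\iint_{\R^2\times\R^2}f(u-x-v)\,d\mu(u)\,d\mu(v)=\psum{x\in L}(f\ast\rho\ast\rho)(x)\ \in[0,\infty],
\]
so that the problem becomes the minimization of the point‑particle (Dirac) energy associated with the effective radial potential $h:=f\ast\rho\ast\rho$; note that $h$ is bounded, since $|f|\le C$ on $\R^2$ and $\rho\ast\rho$ is a probability density.

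The crucial structural fact is that $h$ is again completely monotone in $|x|^2$. Applying the two‑dimensional Gaussian convolution identity $\int_{\R^2}e^{-a|x-y|^2-b|y|^2}\,dy=\frac{\pi}{a+b}\,e^{-\frac{ab}{a+b}|x|^2}$ (for $a,b>0$) twice — first to write $\rho\ast\rho(y)=\int_0^\infty e^{-\tau|y|^2}\,d\nu(\tau)$ with $\nu$ the image of $\frac{\pi}{s+s'}\,d\mu_\rho(s)\,d\mu_\rho(s')$ under $(s,s')\mapsto\frac{ss'}{s+s'}$, then to convolve with $f$ — one obtains, by Tonelli,
\[
h(x)=\int_0^\infty e^{-\sigma|x|^2}\,d\lambda(\sigma),\qquad x\in\R^2,
\]
where $\lambda$ is the image of $\frac{\pi}{t+\tau}\,d\mu_f(t)\,d\nu(\tau)$ under $(t,\tau)\mapsto\frac{t\tau}{t+\tau}$. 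Since $h$ is bounded, $\lambda$ is a finite nonnegative measure on $(0,\infty)$ with $\lambda((0,\infty))=h(0)$. Moreover $\int_{\R^2}\rho=1$ forces $\int_0^\infty s^{-1}\,d\mu_\rho(s)=\pi^{-1}$, while $f\in L^1(\R^2)$ (again from the decay in $\FF$) gives $\int_0^\infty t^{-1}\,d\mu_f(t)=\pi^{-1}\|f\|_{L^1(\R^2)}<\infty$; pushing these through the two changes of variables yields the key bound $\int_0^\infty\sigma^{-1}\,d\lambda(\sigma)=\pi^{-1}\|f\|_{L^1(\R^2)}<\infty$.

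Interchanging sum and integral (again legitimate by Tonelli, all terms nonnegative) and writing $\theta_L(\alpha):=\sum_{x\in L}e^{-\pi\alpha|x|^2}$ for the lattice theta function, the energy becomes
\[
\EE_{f,\mu}[L]=\int_0^\infty\Bigl(\psum{x\in L}e^{-\sigma|x|^2}\Bigr)\,d\lambda(\sigma)=\int_0^\infty\bigl(\theta_L(\sigma/\pi)-1\bigr)\,d\lambda(\sigma).
\]
By Montgomery's theorem, $\theta_L(\alpha)\ge\theta_\Lam(\alpha)$ for every $\alpha>0$ and every Bravais lattice $L$ of unit density, with equality if and only if $L=\Lam$ up to rotation. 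Integrating this against $\lambda\ge 0$ gives $\EE_{f,\mu}[L]\ge\EE_{f,\mu}[\Lam]$ for all such $L$. The right‑hand side is finite: the standard estimate $\theta_\Lam(\alpha)-1\le C\alpha^{-1}$ for all $\alpha>0$ (from the modular functional equation for $\theta_\Lam$ together with the exponential decay of $\theta_\Lam-1$ at infinity) combined with the bound on $\int\sigma^{-1}d\lambda$ yields $\EE_{f,\mu}[\Lam]\le C\pi\int_0^\infty\sigma^{-1}d\lambda(\sigma)=C\|f\|_{L^1(\R^2)}<\infty$. Finally, for uniqueness, if $L$ is not a rotation of $\Lam$ then Montgomery's inequality is strict at every $\alpha>0$; since $\lambda$ is a nonzero measure on $(0,\infty)$ (because $\mu_f\ne0$, i.e. $f\not\equiv0$, and $\mu_\rho\ne0$ automatically from $\int\rho=1$) and $\EE_{f,\mu}[\Lam]<\infty$, subtraction gives $\EE_{f,\mu}[L]-\EE_{f,\mu}[\Lam]=\int_0^\infty\bigl(\theta_L(\sigma/\pi)-\theta_\Lam(\sigma/\pi)\bigr)\,d\lambda(\sigma)>0$, so $\Lam$ is the unique minimizer up to rotation.

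Conceptually there is no real obstacle: the whole point is that both the interaction potential (through $f\in\FF$) and the self‑convolution $\rho\ast\rho$ of the mass distribution are superpositions of Gaussians, so the energy factors through the one‑parameter family $\theta_L$ to which Montgomery's result applies — this is precisely the feature that conditions \eqref{first-con} and \eqref{first-con2} are built to supply. The part demanding genuine care, and where I expect to spend most of the effort, is the measure‑theoretic bookkeeping: checking that $\nu$ and $\lambda$ are well‑defined nonnegative measures, that $\lambda$ is finite and $\int_0^\infty\sigma^{-1}d\lambda<\infty$ so that $\EE_{f,\mu}[\Lam]$ is finite, and that every Fubini/Tonelli exchange is justified — which is exactly why the quantitative decay $|f|+|\hat f|\le C(1+|x|)^{-2-\eta}$ is part of the definition of $\FF$. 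This also explains the contrast with Theorem~\ref{thm-1}: without complete monotonicity of $\mu$ this exact Gaussian reduction is unavailable, and one must instead exploit the smallness of $\eps$.
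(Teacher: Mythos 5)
Your proof is correct, but it takes a different route from the paper. The paper works on the Fourier side: it applies Poisson summation (as in Proposition \ref{prp-epsfou}) to rewrite $\EE_{f,\mu}[L]$ as $\sum_{p\in L^*}'\hat f(p)\hat\rho(p)^2$ plus constants, uses the self-duality $\Lam^*=\Lam$, invokes Lemma \ref{lem-ffhat} to conclude that $\Phi(|p|^2)=\hat f(p)$ and $G(|p|^2)=\hat\rho(p)$ are completely monotone, notes that the product $\Phi G^2$ is again completely monotone, and then cites \cite[Prop.~3.1]{BetTheta15} (which rests on Montgomery) to conclude. You instead stay entirely in physical space: you observe $\EE_{f,\mu}[L]=\sum_{x\in L}'(f\ast\rho\ast\rho)(x)$ and show directly, via the Gaussian convolution identity, that $f\ast\rho\ast\rho$ is itself a superposition of Gaussians, so the energy is an integral of theta functions against a nonnegative measure $\lambda$ and Montgomery's theorem applies pointwise in the Gaussian parameter. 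The two arguments are Fourier-dual to one another (stability of complete monotonicity under products versus stability of Gaussian superpositions under convolution), so the underlying mechanism is the same, but your version needs neither Poisson summation nor the self-duality of $\Lam$, at the price of the measure-theoretic bookkeeping you carry out — the pushforward construction of $\lambda$, the identity $\int_0^\infty\sigma^{-1}d\lambda=\pi^{-1}\|f\|_{L^1}$, and the bound $\theta_\Lam(\alpha)-1\leq C\alpha^{-1}$ — to guarantee $\EE_{f,\mu}[\Lam]<\infty$ so that the strict inequality survives integration; the paper avoids this by the absolute convergence built into its Poisson-summation framework and the cited result. (The only caveat, shared with the paper, is the degenerate case $f\equiv 0$, which you correctly flag by assuming $\mu_f\neq 0$ for uniqueness.)
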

The argument for Theorem \ref{thm-1} is based on a perturbation argument and the
local optimality of the triangular lattice for $\eps $ sufficiently small. The
proof of Theorem \ref{thm-1} is given in Section \ref{proof-thm1}.  The proof of
Theorem \ref{thm-2}, given in Section \ref{subsec-proofth1}, is a generalization
of the arguments used in the proof of the optimality of the triangular lattice
for point masses. In particular, the argument is based on the fact that the
triangular lattice is self-dual and that the class of completely monotone
functions is stable with respect to multiplication. Some of the tools in
\cite{Betermin:2014fy,BetTheta15,Beterloc} will be used in order to prove our
results as well as computations from
\cite{SarStromb,Coulangeon:2010uq}. We note that if $\mu$ satisfies the
  condition of Theorem \ref{thm-2}, then for any $\eps > 0$, the density
  $\mu_\eps$ defined by \eqref{def-mueps} also satisfies these conditions.  In
  this sense, the result of Theorem \ref{thm-2} is independent of the
  concentration of the particles, while this is not the case for Theorem
  \ref{thm-1}.

\medskip

The paper is concerned with the case of absolutely summable interaction
potentials in two dimensions. However, we believe that these results can be
generalized. In particular, the results should still hold true for
non--integrable interaction potentials such as the Riesz potentials
$f(x) = |x|^{-s}$ for any $s > 0$ if a cut-off argument near the origin is
applied and the Ewald summation method is used to obtain summability for large
$x$ in the non-summable case (see for instance \cite[Section
IV]{SaffLongRange}). Moreover, as explained in Remark \ref{rem-highdim}, all our
results related to local minimality could be written in dimensions $4,8$ and
$24$ where the densest packing is self-dual, and is expected to be the unique
minimizer of the theta function for any $\alpha>0$. In these dimensions, only
local minimality results have been proved \cite{SarStromb,Coulangeon:2010uq},
but some recent results \cite{Viazovska,CKMRV} have opened the door for a proof
of this universality in dimension $8$ and $24$, and then to a generalization of
our result in those dimensions (see also Remark \ref{rem-highdim}). Another
interesting question would be to consider the Lennard-Jones-type potentials of
the form $f(x)=a_1|x|^{-p}-a_2|x|^{-q}$, $p>q$, which is widely used in
molecular simulation (see \cite[Section 6.3]{BetTheta15} for some examples). In
this situations, the generalization of our results should hold under the
additional assumption that the density of the lattice is sufficiently high (see
\cite{Betermin:2014fy,BetTheta15} for the optimality of the triangular lattice
in the point mass case for Lennard-Jones-type potentials). Another interesting
extension would be to allow for signed measures
or to consider lattice configurations with alternating charges as in
\cite{BeterminKnuepfer-preprint}.

\paragraph{Notation.} We recall (see e.g. \cite{Rankin,Mont,Beterloc}) that a
two-dimensional Bravais lattice $L \SUS \R^2$ is a set of the form
$L=\Z u_1 \oplus \Z u_2$ for two linearly independent vectors $u_1$,
$u_2 \in \R^2$. The dual lattice $L^*$ is given by $L^*$ $=\{ p \in \R^2 \ $:
$\ p\cdot x\in \Z \text{ for all $x\in L$} \}$. Up to isometry, any Bravais
lattice $L\subset \R^2$ of unit density can be written as
\begin{align} \label{of-form} %
  L := \OL L(x,y) := \Z\left(\frac{1}{\sqrt{y}},0\right)\oplus \Z\left(\frac{x}{\sqrt{y}},\sqrt{y}\right) %
  && \text{with $(x,y)\in D$,}
\end{align}
where $(x,y) \in D$ are uniquely determined in the fundamental domain $D$ with
\begin{align}
  D:=\left\{(x,y)\in \big[0,\tfrac{1}{2}\big]\times (0,\infty) : x^2+y^2\geq 1\right\}.
\end{align}
The set $\DD$ is the set of lattices of the form \eqref{of-form}. It
is hence sufficient to restrict our analysis to the set of lattices in
$\DD$.
Finally, by $\PP(\R^2)$ we denote the space of probability measures. The
  Laplace transform of a Radon measure $\mu$ on $(0,\infty)$ is
\begin{align}
    (\LL \mu)(r) := \int_0^\infty e^{-rt} d\mu(t).
\end{align}
The Fourier transform $\hat f : \R^2 \to \C$ is denoted by
\begin{align} \label{def-fourier} %
  \widehat{f}(p) := \int_{\R^2} f(x) e^{-2i\pi  p\cdot x }dx.
\end{align}  

\section{Proofs}\label{sec-proofs}

\subsection{Preliminaries}\label{subsec-local}

We will work with Bravais lattices of the form \eqref{of-form}. To get a
  notion of local minimality, we introduce a topology and define the distance
of two lattices by
\begin{align} \label{metric} %
  d(L_1,L_2) = \sqrt{\inf_{k \in \Z} |x_1 - x_2 - \tfrac 12 k|^2 + |y_1 - y_2|^2} && \text{for  $L_i = \OL L(x_i,y_i)\in \DD$}.
\end{align}
We also denote by $B_R(L)$ the set of all $\t L\in \DD$ with $d(L,\t L)<R$. We
write $f \in C^k(\DD)$ if $f$ is $k$-times differentiable in $(x,y)$. As usual
we denote the gradient by $\nabla E[L]:= (\partial_x E,\partial_y E)[L]$. The
Hessian $D^2 E$ is defined as the matrix of second derivatives w.r.t. $x$,
$y$. $D^3 E$ is correspondingly the tensor of all third derivatives. We also
define:
  \begin{definition}[Local minimizers and critical points in $\DD$] %
    Let $E : \DD \to \R$.  $L$ is a strict local minimizer in $\DD$ if there
      is $\eta > 0$ such that $E[L] < E[\t L]$ for all $\t L\in B_\eta(L)$.
      Furthermore, $L$ is a critical point of $E \in C^1(\DD)$ in $\DD$ if
      $\nabla E[L] = (0,0)$. 
\end{definition}
Before we consider the case of diffuse charge configurations, we collect some
basic facts about the interaction energy for point charges:
\begin{definition}[Sharp interface energy] %
  For $h \in \FF$ and $L\in \DD$, we define
  \begin{align} \label{E-point} %
    E_h[L]:=\sum_{p\in L}{\vphantom{\sum}}'  h(p).
  \end{align}
\end{definition}

We turn to the investigation of local minimality of lattice energies for spread
out particles. We first recall the summation formula of Poisson \cite[Cor.
  2.6]{SteinWeiss}:
\begin{proposition}[Poisson summation formula] \label{prp-poisson} %
  For $f\in \FF$ and $L\in \DD$, we have
  \begin{align} \label{eq-poisson} %
    \sum_{x\in L} f(x+z)=\sum_{p\in L^*} e^{2\pi i p\cdot z}\hat{f}(p) &&\text{for all $z\in \R^2$,}
  \end{align}
  and the series on both sides of \eqref{eq-poisson} converge absolutely.
\end{proposition}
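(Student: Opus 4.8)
The plan is to run the classical periodization argument for the Poisson summation formula, using the decay built into the definition of the class $\FF$ to justify every convergence and interchange step; this is in fact exactly the setting of \cite[Cor. 2.6]{SteinWeiss}, and the purpose of the hypotheses defining $\FF$ is precisely to guarantee the hypotheses of that classical statement.

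First I would record the absolute convergence of both sides. Fix $L\in\DD$ (hence of unit covolume) and $z\in\R^2$. Only finitely many $x\in L$ satisfy $|x+z|\le 1$; for the remaining ones, $|f(x+z)|\le C(1+|x+z|)^{-2-\eta}$, and comparing the tail with $\int_{\R^2}(1+|w|)^{-2-\eta}\,dw<\infty$ (finite because $2+\eta>2$) shows $\sum_{x\in L}|f(x+z)|<\infty$, uniformly for $z$ in compact sets. The same bound applied to $\hat f$ on $L^*$ gives $\sum_{p\in L^*}|\hat f(p)|<\infty$. Next, $f$ is continuous: evaluating \eqref{first-con} at $x=0$ shows $\mu_f$ is a finite measure, and dominated convergence in \eqref{first-con} then yields continuity of $f$ on $\R^2$ (alternatively, $f,\hat f\in L^1(\R^2)$, and Fourier inversion identifies $f$ with a continuous function).

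Consequently $F(z):=\sum_{x\in L}f(x+z)$ is a continuous, $L$-periodic function, i.e. a continuous function on the torus $\R^2/L$, which has unit volume since $L\in\DD$. I would then compute its Fourier coefficients: for $p\in L^*$,
\begin{align*}
  \int_{\R^2/L}F(z)\,e^{-2\pi i p\cdot z}\,dz
   &=\sum_{x\in L}\int_{\R^2/L}f(x+z)\,e^{-2\pi i p\cdot z}\,dz\\
   &=\int_{\R^2}f(w)\,e^{-2\pi i p\cdot w}\,dw=\hat f(p),
\end{align*}
where the sum--integral interchange is justified by the absolute convergence above, the substitution $w=x+z$ unfolds the sum over $x\in L$ and the fundamental domain into a single integral over $\R^2$, and $e^{2\pi i p\cdot x}=1$ because $p\cdot x\in\Z$ for $p\in L^*$ and $x\in L$.

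Finally, since $\sum_{p\in L^*}|\hat f(p)|<\infty$, the series $G(z):=\sum_{p\in L^*}\hat f(p)\,e^{2\pi i p\cdot z}$ converges absolutely and uniformly to a continuous $L$-periodic function whose Fourier coefficients are exactly the $\hat f(p)$. Thus $F$ and $G$ are two continuous functions on $\R^2/L$ with identical Fourier coefficients, so $F\equiv G$ by uniqueness of Fourier coefficients (a continuous function on the torus all of whose Fourier coefficients vanish is identically zero, e.g. via Fej\'er summation), which is \eqref{eq-poisson}. There is no genuine obstacle here: the only steps requiring care are the continuity of $f$ and the Fubini interchange, both of which are immediate from the defining properties of $\FF$ — this is exactly why the decay $|f|+|\hat f|\le C(1+|\cdot|)^{-2-\eta}$ is imposed in Definition \ref{def-energy}.
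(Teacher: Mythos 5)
Your proof is correct; the paper does not argue this proposition itself but simply cites \cite[Cor.~2.6]{SteinWeiss}, and your periodization argument (absolute convergence and continuity from the decay of $f$ and $\hat f$ in the definition of $\FF$, unfolding over a unit-covolume fundamental domain to identify the Fourier coefficients, then uniqueness of Fourier coefficients of continuous periodic functions) is exactly the standard proof underlying that citation. So you are taking essentially the same route, just writing it out in full rather than invoking the reference.
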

We note that the assumption in Definition \ref{def-energy} on the decay of $f$
and $\hat f$ is included to ensure that \eqref{eq-poisson} can be applied and to
get absolute convergence.

\medskip

A representation of the energy in terms of Fourier variables is given next:
\begin{proposition}[Fourier representation of $\EE_{f,\mu}$]
  \label{prp-epsfou} %
  Let $f\in \FF$, $\mu \in \PP(\R^2)$ be rotationally symmetric with respect to the origin and $L\in \DD$ be
  such that $\EE_{f,\mu}[L]<\infty$. Then
\begin{equation}
\EE_{f,\mu}[L]=E_h[L^*]+\hat{f}(0)-(f\ast \mu \ast \mu)(0)
\end{equation}  
where $E_h$ is defined in \eqref{E-point} and where
  \begin{align}\label{def-g} %
    \quad h(p) :=\hat{f}(p)g(|p|)^2, &&
                                        g(t)
                                        := \int_0^\infty J_0(2\pi s t)  d\psi(s).
  \end{align}
  Here, $\psi$ is the Lebesgue--Stieltjes measure of $t\mapsto \mu(B_t)$, i.e.
  $\psi([r_1,r_2)) = \mu(B_{r_2}) - \mu(B_{r_1})$ and $J_0$ is the Bessel
  function of the first kind.
\end{proposition}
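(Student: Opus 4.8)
The plan is to expand the energy using Poisson summation and then carry out the angular (rotational) averages explicitly using the assumed radial symmetry of $\mu$. First I would rewrite the double integral defining $\EE_{f,\mu}[L]$. Since
\[
\EE_{f,\mu}[L]=\psum{x\in L}(f\ast\mu\ast\mu)(x)
=\sum_{x\in L}(f\ast\mu\ast\mu)(x)-(f\ast\mu\ast\mu)(0),
\]
and $f\ast\mu\ast\mu\in\FF$-like in the sense that it and its Fourier transform decay fast enough (here one uses $\widehat{f\ast\mu\ast\mu}=\hat f\,\hat\mu^2$ together with $|\hat\mu|\le 1$ and the decay of $\hat f$, while the space-side decay follows because $\mu$ is a probability measure and $f\in\FF$), Proposition \ref{prp-poisson} applies with $z=0$ and gives
\[
\sum_{x\in L}(f\ast\mu\ast\mu)(x)=\sum_{p\in L^*}\hat f(p)\,\hat\mu(p)^2.
\]
Splitting off the $p=0$ term, which is $\hat f(0)\hat\mu(0)^2=\hat f(0)$ since $\mu$ is a probability measure, and combining with the subtracted term $(f\ast\mu\ast\mu)(0)$, I get
\[
\EE_{f,\mu}[L]=\sum_{p\in L^*}{}'\,\hat f(p)\,\hat\mu(p)^2+\hat f(0)-(f\ast\mu\ast\mu)(0).
\]

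The remaining task is to identify $\hat\mu(p)$ with $g(|p|)$ defined in \eqref{def-g}. Here I would use that $\mu$ is rotationally symmetric about the origin, so its distribution function $t\mapsto\mu(B_t)$ determines $\mu$, and $\psi$ is the associated Lebesgue--Stieltjes (radial) measure. Writing the Fourier transform in polar coordinates,
\[
\hat\mu(p)=\int_{\R^2}e^{-2\pi i p\cdot x}\,d\mu(x)
=\int_0^\infty\!\!\left(\frac{1}{2\pi}\int_0^{2\pi}e^{-2\pi i |p| s\cos\theta}\,d\theta\right)d\psi(s),
\]
and the inner integral is exactly the standard integral representation $J_0(2\pi|p|s)=\frac{1}{2\pi}\int_0^{2\pi}e^{-2\pi i |p|s\cos\theta}\,d\theta$ of the Bessel function of the first kind of order zero. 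Hence $\hat\mu(p)=\int_0^\infty J_0(2\pi s|p|)\,d\psi(s)=g(|p|)$, and in particular $\hat\mu$ is real-valued and radial, so $\hat\mu(p)^2=g(|p|)^2$ and $\hat f(p)\hat\mu(p)^2=h(p)$ with $h$ as in \eqref{def-g}. Substituting into the displayed identity yields $\sum_{p\in L^*}{}'\,h(p)=E_h[L^*]$, which is the claim.

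The main point requiring care is the justification that Poisson summation is legitimate for $f\ast\mu\ast\mu$, i.e. that this convolution (and its Fourier transform) lies in the decay class required by Proposition \ref{prp-poisson}. The Fourier side is immediate from $|\hat\mu|\le\hat\mu(0)=1$ and $f\in\FF$; the physical-space decay is the slightly more delicate part, since convolving with the (possibly non-compactly supported, non-absolutely-continuous) probability measure $\mu$ could in principle spoil the $(1+|x|)^{-2-\eta}$ bound --- but because $\mu$ has unit mass and one may exploit the hypothesis $\EE_{f,\mu}[L]<\infty$ together with a dominated-convergence/Fubini argument (or, alternatively, argue directly on the Fourier side where absolute convergence is manifest and then invoke the finiteness assumption to match the two expressions), one obtains the absolute convergence needed. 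A secondary technical point is the Fubini interchange in the polar-coordinate computation of $\hat\mu$, which is justified since $|J_0|\le 1$ and $\psi$ is a finite measure with total mass $\mu(\R^2)=1$.
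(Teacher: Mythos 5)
Your overall strategy (Poisson summation, factorization of the exponential, and identification of $\hat\mu(p)=g(|p|)$ via the polar-coordinate/Bessel representation) is the same as the paper's, and that last computation is correct. The genuine gap is the step where you apply Proposition \ref{prp-poisson} to $F:=f\ast\mu\ast\mu$. That proposition requires the spatial decay $|F(x)|\leq C(1+|x|)^{-2-\eta}$, and your initial claim that this ``follows because $\mu$ is a probability measure and $f\in\FF$'' is false: if $\mu$ is radial with a heavy tail, e.g. with density behaving like $|x|^{-2}(\log|x|)^{-2}$ at infinity, then $F$ decays no faster than this tail, so no bound of the form $(1+|x|)^{-2-\eta}$ holds for any $\eta>0$, even though $\sum_{x\in L}'F(x)$, and hence $\EE_{f,\mu}[L]$, is finite. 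You flag the difficulty at the end, but the proposed rescues do not close it: absolute convergence of both sides of \eqref{eq-poisson} is in general \emph{not} sufficient for the Poisson summation identity (there are classical counterexamples), so ``finiteness of the energy plus dominated convergence'' is not an argument, and ``matching the two expressions on the Fourier side'' is exactly what has to be proved.

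The repair is a small reordering of the same ingredients, and it is what the paper does: apply Poisson summation to $f$ itself (which is in $\FF$, so Proposition \ref{prp-poisson} does apply) for each fixed $(u,v)$, i.e. $\sum_{x\in L}f(u-x-v)=\sum_{p\in L^*}e^{2\pi i p\cdot(u-v)}\hat f(p)$ (using that $f$ is even), and then integrate in $d\mu(u)\,d\mu(v)$. The exchange of the sum over $x$ with the double integral is free by Tonelli since $f\geq 0$ by \eqref{first-con}, and the exchange of the sum over $p$ with the double integral only needs $\sum_{p\in L^*}|\hat f(p)|<\infty$ (decay of $\hat f$) together with $\mu\in\PP(\R^2)$; no decay of $f\ast\mu\ast\mu$ is ever needed. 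This produces $\sum_{p\in L^*}'\hat f(p)\hat\mu(p)^2+\hat f(0)-(f\ast\mu\ast\mu)(0)$ directly, after which your Bessel/Hankel--Stieltjes identification of $\hat\mu$ with $g$ finishes the proof exactly as in the paper.
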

\begin{proof} %
  By Fubini and Poisson's summation formula (Proposition \ref{prp-poisson}), we
  get
  \begin{align}
    \EE_{f,\mu}[L]&=\iint_{\R^2\times \R^2} \sum_{p\in L^*}{\vphantom{\sum}}'  \hat{f}(p) e^{2i \pi p\cdot (u-v)}d\mu(u)d\mu(v)+\hat{f}(0)-(f\ast \mu \ast \mu)(0) \\
  &= \sum_{p\in L^*}{\vphantom{\sum}}' 
    \hat{f}(p)\Big(\int_{\R^2}e^{2i\pi p\cdot u} d\mu(u) \Big)
    \Big(\int_{\R^2}e^{-2i\pi p\cdot v} d\mu(v) \Big) +\hat{f}(0)-(f\ast \mu \ast \mu)(0) \\%
    &=\sum_{p\in L^*}{\vphantom{\sum}}' \hat{f}(p) \hat{\mu}(p)^2 +\hat{f}(0)-(f\ast \mu \ast \mu)(0).
  \end{align}
  The proof is completed by noting that $\hat{\mu}$ is given by the
    Hankel-Stieltjes transform (\cite[Section 2]{ConnetSchwartz}) and hence
    $\hat \mu = g$ where $g$ is given in \eqref{def-g}.
\end{proof}
Since $\Lam^*=\Lam$ up to rotation, a consequence of Proposition
\ref{prp-epsfou} is that proving the (local) minimality of $\Lam$ for
$\EE_{f,\mu}$ in $\DD$ is the same as proving it for $E_{h}$. In particular, we
can use the following result:
\begin{lemma}[\cite{Coulangeon:2010uq,Beterloc}] \label{lem-critictriang} Let
  $L=\Lam$ be the triangular lattice of unit density. Let
  $f\in L^1(\R^2)\cap C^2(\R^2\backslash\{0\})$ be rotationally symmetric with respect to the origin and $F$ be given by
  $F(|x|^2) := f(x)$. Then
  \begin{enumerate}
  \item $\Lam$ is a critical point of $E_f$ in $\DD$.
  \item We have $D^2 E_f[\Lam] = T_f \Id_{\R^2}$ where
    $\Id_{\R^2}$ is the identity on $\R^2$ and
    \begin{align}
      T_f:=\frac{4}{\sqrt{3}}\sum_{m,n}{\vphantom{\sum}}'n^2 F'\big(\tfrac{2}{\sqrt{3}}(m^2+mn+n^2) \big)+\frac{4}{3}\sum_{m,n}{\vphantom{\sum}}'n^4 F''\big(\tfrac{2}{\sqrt{3}}(m^2+mn+n^2)\big).
    \end{align}
  \end{enumerate}
\end{lemma}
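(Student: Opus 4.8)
The plan is to replace lattices by binary quadratic forms, differentiate the lattice sum term by term, and then exploit the automorphism group of the hexagonal form to evaluate the sums that appear. Taking the basis $u_1 = (y^{-1/2},0)$, $u_2 = (xy^{-1/2}, y^{1/2})$ of $\OL L(x,y)$, one has $|m u_1 + n u_2|^2 = \phi_{m,n}(x,y)$ with
\begin{align*}
  \phi_{m,n}(x,y) := \frac{(m+nx)^2}{y} + n^2 y = \frac{m^2 + 2mnx + n^2(x^2+y^2)}{y},
\end{align*}
so that $E_f[\OL L(x,y)] = \sum_{m,n}{\vphantom{\sum}}' F(\phi_{m,n}(x,y))$. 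Although $\Lam$ lies on $\partial D$, the right-hand side is defined and smooth for all $(x,y)\in\R\times(0,\infty)$, and $\Lam = \OL L(x_0,y_0)$ with $(x_0,y_0):=(\tfrac12,\tfrac{\sqrt3}{2})$ and $\phi_{m,n}(x_0,y_0)=\tfrac{2}{\sqrt3}q(m,n)$, where $q(m,n):=m^2+mn+n^2$. Under decay assumptions on $F,F',F''$ that make the lattice sum and its first two $(x,y)$-derivatives absolutely and locally uniformly convergent --- which hold in all applications of the lemma, since $\phi_{m,n}\asymp m^2+n^2$ uniformly near $(x_0,y_0)$ and $\partial\phi_{m,n}$, $\partial^2\phi_{m,n}$ grow only polynomially --- one may differentiate under the sum, reducing everything to the evaluation at $(x_0,y_0)$ of explicit lattice sums in $F'(\tfrac2{\sqrt3}q(m,n))$ and $F''(\tfrac2{\sqrt3}q(m,n))$.

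The engine for this evaluation is the group $\GG:=\{\gamma\in GL_2(\Z):q\circ\gamma=q\}$, the order-$12$ automorphism group of the hexagonal form: it permutes $\Z^2\setminus\{(0,0)\}$ and fixes $\tfrac2{\sqrt3}q$, so for any polynomial $P$ and $k\in\{0,1,2\}$ one has $\sum_{m,n}{\vphantom{\sum}}'F^{(k)}(\tfrac2{\sqrt3}q)P=\sum_{m,n}{\vphantom{\sum}}'F^{(k)}(\tfrac2{\sqrt3}q)\bar P$ with $\bar P:=|\GG|^{-1}\sum_{\gamma\in\GG}P\circ\gamma$. Since the space of $\GG$-invariant homogeneous polynomials of degree $2$ (resp.\ $4$) is one-dimensional, spanned by $q$ (resp.\ $q^2$), each relevant $\bar P$ is a scalar multiple of $q$ or $q^2$, the scalar read off by evaluating at $(1,0)$ (whose $\GG$-orbit is the six minimal vectors $\pm(1,0),\pm(0,1),\pm(1,-1)$). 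One gets $\overline{n^2}=\tfrac23 q$, $\overline{(2m+n)^2}=2q$, $\overline{n^4}=\tfrac23 q^2$, $\overline{n^2(2m+n)^2}=\tfrac23 q^2$, $\overline{\big(n^2-\tfrac13(2m+n)^2\big)^2}=\tfrac89 q^2$. A second useful element of $\GG$ is the reflection $\sigma(m,n):=(m+n,-n)$, which fixes $\tfrac2{\sqrt3}q$ and sends $n(2m+n)\mapsto -n(2m+n)$ (note $2m+n$ is $\sigma$-invariant).

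For (i): at $(x_0,y_0)$ one has $\partial_x\phi_{m,n}=\tfrac2{\sqrt3}n(2m+n)$ and $\partial_y\phi_{m,n}=n^2-\tfrac13(2m+n)^2$, hence $\partial_x E_f[\Lam]=\tfrac2{\sqrt3}\sum_{m,n}{\vphantom{\sum}}'F'(\tfrac2{\sqrt3}q)n(2m+n)$, which vanishes since its summand is odd under $\sigma$, and $\partial_y E_f[\Lam]=\sum_{m,n}{\vphantom{\sum}}'F'(\tfrac2{\sqrt3}q)\big(n^2-\tfrac13(2m+n)^2\big)$, which vanishes since $\overline{n^2-\tfrac13(2m+n)^2}=\tfrac23 q-\tfrac23 q=0$. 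As $E_f$ extends smoothly across $\partial D$, this is genuine vanishing of $\nabla E_f[\Lam]$, giving the critical-point claim. For (ii): at $(x_0,y_0)$ one also has $\partial_{xx}\phi_{m,n}=\tfrac4{\sqrt3}n^2$, $\partial_{xy}\phi_{m,n}=-\tfrac43 n(2m+n)$, $\partial_{yy}\phi_{m,n}=\tfrac{4}{3\sqrt3}(2m+n)^2$. The off-diagonal entry $\partial_{xy}E_f[\Lam]=\sum_{m,n}{\vphantom{\sum}}'\big(F''(\tfrac2{\sqrt3}q)\partial_x\phi_{m,n}\partial_y\phi_{m,n}+F'(\tfrac2{\sqrt3}q)\partial_{xy}\phi_{m,n}\big)$ vanishes because every term carries the $\sigma$-odd factor $n(2m+n)$. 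For the diagonal entries, $\partial_{xx}E_f[\Lam]=\tfrac43\sum_{m,n}{\vphantom{\sum}}'F''(\tfrac2{\sqrt3}q)n^2(2m+n)^2+\tfrac4{\sqrt3}\sum_{m,n}{\vphantom{\sum}}'F'(\tfrac2{\sqrt3}q)n^2$ and $\partial_{yy}E_f[\Lam]=\sum_{m,n}{\vphantom{\sum}}'F''(\tfrac2{\sqrt3}q)\big(n^2-\tfrac13(2m+n)^2\big)^2+\tfrac{4}{3\sqrt3}\sum_{m,n}{\vphantom{\sum}}'F'(\tfrac2{\sqrt3}q)(2m+n)^2$; using $\tfrac43\overline{n^2(2m+n)^2}=\overline{\big(n^2-\tfrac13(2m+n)^2\big)^2}=\tfrac43\overline{n^4}\;(=\tfrac89 q^2)$ for the $F''$-parts and $\tfrac4{\sqrt3}\overline{n^2}=\tfrac{4}{3\sqrt3}\overline{(2m+n)^2}\;(=\tfrac{8}{3\sqrt3}q)$ for the $F'$-parts, both collapse to $\tfrac4{\sqrt3}\sum_{m,n}{\vphantom{\sum}}'n^2F'(\tfrac2{\sqrt3}q)+\tfrac43\sum_{m,n}{\vphantom{\sum}}'n^4F''(\tfrac2{\sqrt3}q)=T_f$. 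Hence $D^2 E_f[\Lam]=T_f\,\Id_{\R^2}$.

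It is worth recording the structural reason behind these vanishings: $\Lam$ is an order-$3$ elliptic fixed point of the $PSL_2(\Z)$-action on the space of unit-density lattices, whose differential at $\Lam$ acts on the $(x,y)$-tangent plane as a Euclidean rotation by $2\pi/3$; since $E_f$ depends on a lattice only up to rotation and $f$ is radial, $E_f$ is $PSL_2(\Z)$-invariant, so $\nabla E_f[\Lam]$ and $D^2 E_f[\Lam]$ are fixed by that rotation, which a priori forces $\nabla E_f[\Lam]=(0,0)$ and $D^2 E_f[\Lam]\in\R\,\Id_{\R^2}$ --- only the constant $T_f$ genuinely needs the computation above. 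Accordingly the real work is bookkeeping: justifying the term-by-term differentiation under the standing decay hypotheses, and keeping the $\GG$-symmetrizations consistent, for which the one-dimensionality of the degree-$\le 4$ invariants is the organizing fact; this is the step most prone to arithmetic slips but presents no conceptual obstacle.
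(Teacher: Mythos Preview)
The paper does not supply its own proof of this lemma: it is quoted from \cite{Coulangeon:2010uq,Beterloc} and used as a black box. Your argument is a correct self-contained proof, and it is in fact the two-dimensional instantiation of the method in \cite{Coulangeon:2010uq}: there the key structural input is that the automorphism group of the relevant lattice acts irreducibly both on $\R^2$ and on the space of trace-zero symmetric $2\times 2$ matrices, which is precisely what your ``one-dimensionality of degree-$2$ and degree-$4$ $\GG$-invariants'' encodes. Your final paragraph (the $PSL_2(\Z)$ fixed-point remark) is exactly the conceptual explanation those references give; the explicit $\GG$-averaging you carry out just makes the constant $T_f$ explicit, matching the formula stated in the lemma. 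All the partial derivatives of $\phi_{m,n}$ at $(\tfrac12,\tfrac{\sqrt3}{2})$ and all the symmetrized averages $\overline{n^2},\overline{(2m+n)^2},\overline{n^4},\overline{n^2(2m+n)^2},\overline{(n^2-\tfrac13(2m+n)^2)^2}$ check out, so both diagonal entries indeed collapse to $T_f$ and the off-diagonal to $0$.

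One small caveat worth flagging: the lemma as stated only assumes $f\in L^1(\R^2)\cap C^2(\R^2\setminus\{0\})$, which by itself does not guarantee the absolute and locally uniform convergence you need to differentiate term by term (you acknowledge this, invoking ``decay assumptions \dots\ which hold in all applications''). That is fine in context---every $f$ to which the paper actually applies the lemma lies in $\FF$ or is a product of such functions with tame factors, so the required bounds are available---but strictly speaking the lemma's hypotheses are a touch weaker than what your proof uses. This is a statement-level issue rather than a defect of your argument.
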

We next give a generalization of \cite[Thm. 4.6]{Coulangeon:2010uq} for the
two-dimensional case.
\begin{proposition}[Local optimality of $\Lam$ for $E_f$]\label{prop-locmin}
  Let $f\in \FF$. Then $\Lam$ is a strict local minimum of $E_f$ in $\DD$
  with positive second derivative.
\end{proposition}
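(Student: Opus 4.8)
The plan is to reduce the statement for a general $f\in\FF$ to a statement about Gaussians by exploiting the representation \eqref{first-con}, namely $f(x)=\int_0^\infty e^{-|x|^2 t}\,d\mu_f(t)$ with $\mu_f\geq 0$. First I would note that for the Gaussian $g_\alpha(x)=e^{-\alpha|x|^2}$ one has $E_{g_\alpha}[L]=\theta_L(\alpha/\pi)-1$ (after normalizing constants), where $\theta_L$ is the lattice theta function, so that by Montgomery's theorem $\Lam$ is the global minimizer of $E_{g_\alpha}$ in $\DD$ at every $\alpha>0$. Since $\Lam$ is a critical point of $E_{g_\alpha}$ (Lemma \ref{lem-critictriang}(i)) and global minimality forces the Hessian to be positive semidefinite, the real content to extract is that $D^2E_{g_\alpha}[\Lam]$ is \emph{strictly} positive, i.e. $T_{g_\alpha}>0$ for every $\alpha>0$. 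This is where I would invoke the explicit formula in Lemma \ref{lem-critictriang}(ii) together with the computation already available in \cite{Coulangeon:2010uq} (their Thm.~4.6 in higher dimensions, here specialized and checked in $d=2$): for $F(r)=e^{-\alpha r}$ one gets $T_{g_\alpha}$ as an explicit combination of lattice sums of the shape $\sum' p(m,n)e^{-c\alpha(m^2+mn+n^2)}$ which can be rewritten in terms of $\theta$-like series and shown to be strictly positive for all $\alpha>0$ (the positivity is not automatic termwise because of the $n^2F'$ sign, so this rewriting is the crux).

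Given $T_{g_\alpha}>0$ for all $\alpha>0$, I would then assemble the general case by superposition. Interchanging the (absolutely convergent, by the decay hypotheses in Definition \ref{def-energy}) lattice sum with the integral over $d\mu_f(t)$, one obtains $E_f[L]=\int_0^\infty E_{g_t}[L]\,d\mu_f(t)$ and, differentiating twice in $(x,y)$ with the decay estimates justifying differentiation under the integral sign, $D^2E_f[\Lam]=\Big(\int_0^\infty T_{g_t}\,d\mu_f(t)\Big)\Id_{\R^2}=:T_f\,\Id_{\R^2}$. Since $\mu_f$ is a nonnegative measure which is not identically zero (as $f\not\equiv 0$) and $T_{g_t}>0$ for every $t>0$, we get $T_f>0$, hence $D^2E_f[\Lam]$ is positive definite. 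Combined with $\nabla E_f[\Lam]=0$ from Lemma \ref{lem-critictriang}(i), the second-derivative test gives that $\Lam$ is a strict local minimum of $E_f$ in $\DD$ with positive second derivative, as claimed.

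The main obstacle is the strict positivity $T_{g_\alpha}>0$ for \emph{all} $\alpha>0$: the formula for $T_{g_\alpha}$ in Lemma \ref{lem-critictriang}(ii) mixes a first-derivative term (with $F'<0$ for the Gaussian, so negative contributions) and a second-derivative term, and one must show the positive part dominates uniformly in $\alpha$. I expect this to follow by rewriting the two lattice sums via the modular/theta-function identities used in \cite{Coulangeon:2010uq,SarStromb} — exploiting that $\Lam$ is self-dual and that the relevant combination is, up to positive factors, $\alpha$ times a derivative of $\theta_\Lam$ which Montgomery-type convexity arguments show has a strict sign — rather than by any termwise estimate. Two minor technical points also need care but are routine given the hypotheses: (a) justifying Fubini and differentiation under the integral sign in $E_f=\int E_{g_t}\,d\mu_f$, which uses the $(1+|x|)^{-2-\eta}$ decay of $f$ and $\hat f$ and the fact that $\mu_f$ has at most polynomial mass growth; and (b) verifying that $\mu_f\neq 0$, which is immediate since $f\in\FF$ and $f\equiv0$ is excluded.
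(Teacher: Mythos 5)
Your proposal is correct and follows essentially the same route as the paper: reduce to the Gaussian case via the representation $f(x)=\int_0^\infty e^{-|x|^2t}\,d\mu_f(t)$, superpose with the nonnegative measure $\mu_f$, and take the strict positivity of $D^2E_{G_t}[\Lam]$ for every $t>0$ from \cite{SarStromb} (Eq.~(46)) as explained in \cite{Coulangeon:2010uq}, exactly the citation the paper itself relies on for the step you flag as the main obstacle. The only difference is cosmetic: the paper treats that positivity purely as a quoted fact rather than re-deriving it, and does not dwell on the degenerate case $\mu_f\equiv 0$ or on the interchange of sum and integral, which your decay-based justification handles correctly.
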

\begin{proof}
  We first claim that for any $t>0$, the triangular lattice $\Lam$ is a strict
  local minimum in $\DD$ with positive second derivative of the lattice theta
  function
  \begin{align}\label{def-theta}
    \theta_L(t) := \sum_{x\in L} e^{-\pi t|x|^2} = E_{G_t}[L] +1,
  \end{align}
  where $G_t$ is the Gaussian potential $G_t(x)=e^{-\pi t |x|^2}$. As explained
  in \cite{Coulangeon:2010uq} for dimensions $d\in \{4,8,24\}$, a direct
  consequence of \cite[Eq. (46)]{SarStromb} is that
  $D^2\hspace{1mm}E_{G_t}[\Lam]$ is positive definite for any $t>0$.  Since we
  can write, by Fubini's theorem,
  \begin{align}
    E_f[L]= \int_0^{\infty}\left( \theta_L(t/\pi)-1\right)d\mu_f(t),
  \end{align}
  for any $f \in \FF$, the same result also holds all $f \in \FF$.
\end{proof}
Finally, it is useful to note that the class of functions $\FF$
is stable under the application of the Fourier transform:
\begin{lemma}\label{lem-ffhat}
  We have $f\in \FF $ if and only if $\hat{f}\in \FF $.
\end{lemma}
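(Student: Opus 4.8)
The plan is to exploit that $\FF$ consists, apart from the integrable-tail condition, of nonnegative superpositions of Gaussians $x\mapsto e^{-|x|^2 t}$, together with the elementary fact that in $\R^2$ the Fourier transform \eqref{def-fourier} sends $e^{-t|\cdot|^2}$ to $\tfrac{\pi}{t}e^{-\pi^2|\cdot|^2/t}$, i.e.\ again a Gaussian, but with the parameter inverted. Both conditions defining $\FF$ --- the decay bound on $|f|+|\hat f|$ and the Laplace-type representation \eqref{first-con} --- are symmetric under $f\leftrightarrow\hat f$ once Fourier inversion is available, so the lemma should come out of a single change of variables.

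First I would record the routine preliminaries. If $f\in\FF$, then by \eqref{first-con} the value $f(x)$ depends on $x$ only through $|x|^2$, so $f$ is radial and nonnegative; hence $\hat f$ is real-valued and radial. The decay bound puts both $f$ and $\hat f$ in $L^1(\R^2)$, so Fourier inversion applies and $\widehat{\hat f}=f$. I would also note that the decay bound forces $\mu_f(\{0\})=0$, so that $t\mapsto\pi^2/t$ is defined $\mu_f$-a.e.\ on $(0,\infty)$.

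For the forward implication, assume $f\in\FF$. By Tonelli and the Gaussian identity, $\int_{\R^2}f=\int_0^\infty\tfrac{\pi}{t}\,d\mu_f(t)$, and the left-hand side is finite because $f\in L^1(\R^2)$; hence $d\nu(t):=\tfrac{\pi}{t}\,d\mu_f(t)$ is a \emph{finite} nonnegative measure on $(0,\infty)$. This finiteness legitimises the exchange of integration order in
\begin{align}
  \hat f(p)=\int_{\R^2}\!\Big(\int_0^\infty e^{-t|x|^2}\,d\mu_f(t)\Big)e^{-2\pi i p\cdot x}\,dx
          =\int_0^\infty\frac{\pi}{t}\,e^{-\pi^2|p|^2/t}\,d\mu_f(t)
          =\int_0^\infty e^{-\pi^2|p|^2/t}\,d\nu(t).
\end{align}
Pushing $\nu$ forward under the homeomorphism $t\mapsto\pi^2/t$ of $(0,\infty)$ yields a finite, hence Radon, nonnegative measure $\mu_{\hat f}$ with $\hat f(p)=\int_0^\infty e^{-|p|^2 s}\,d\mu_{\hat f}(s)$, so $\hat f$ satisfies \eqref{first-con}. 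The decay bound for $\hat f$ is then immediate: since $\widehat{\hat f}=f$, we have $|\hat f(x)|+|\widehat{\hat f}(x)|=|f(x)|+|\hat f(x)|\le C(1+|x|)^{-2-\eta}$. Hence $\hat f\in\FF$. For the converse, if $\hat f\in\FF$, applying the forward implication to $\hat f$ gives $\widehat{\hat f}\in\FF$; since $\hat f$ is radial and lies in $L^1(\R^2)$, Fourier inversion identifies $f$ with $\widehat{\hat f}$, so $f\in\FF$.

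I do not anticipate a genuine obstacle. The two points needing care are the Fubini interchange, which is settled as soon as $\nu$ is shown to be finite, and the verification that the substitution $s=\pi^2/t$ produces a bona fide nonnegative Radon measure rather than a merely formal expression; both reduce to the observations above, in particular $\mu_f(\{0\})=0$ and $\int_0^\infty t^{-1}\,d\mu_f(t)<\infty$.
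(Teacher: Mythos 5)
Your argument is correct and is essentially the paper's own proof: you Fourier-transform the representation \eqref{first-con}, exchange the integrals by Fubini, and use that the Gaussian $e^{-t|\cdot|^2}$ transforms to $\tfrac{\pi}{t}e^{-\pi^2|\cdot|^2/t}$, so that $\hat f$ is again of the form \eqref{first-con} with the pushed-forward measure. Your extra care (finiteness of $\int t^{-1}d\mu_f$ to justify Fubini, the symmetry of the decay bound via $\widehat{\hat f}=f$, and the converse by inversion) only makes explicit what the paper leaves implicit.
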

\begin{proof}
  We need to show that $\hat{f}$ can be written as in \eqref{first-con}. By
  definition, for any $f\in \FF$ there is a nonnegative Radon measure $\mu_f$
  such that \eqref{first-con} holds.  By application of the Fourier transform
  and by Fubini's theorem, we then have
  \begin{align}
    \hat{f}(p) %
    &=\int_0^{\infty} \left( \int_{\R^2} e^{-|x|^2 t} e^{-2 i \pi x\cdot p} dx \right)d\mu_f(t) %
      =\int_0^{\infty} e^{-\frac{\pi^2}{t} |p|^2}t^{-1}d\mu_f(t). 
  \end{align}
  It follows that $\hat f$ can be expressed in the form \eqref{first-con} with
  $\mu_f$ replaced by some nonnegative Radon measure $\mu_{\hat{f}}$.
\end{proof}
The proof of the next result about local minimality of $\Lam$ for
$\EE_{f,\mu_\eps}$ and small $\eps$ is given in the Appendix:
\begin{proposition}[Local minimality of $\Lam$ for small
  $\eps$]\label{prp-localmin}
  Let
  $f\in \FF$ and $\mu\in \PP(\R^2)$ be rotationally symmetric with respect to the origin. With $\mu_\eps$ given
  by \eqref{def-mueps}, we assume that $\EE_{f,\mu_\eps}[L]<\infty$ for any $L\in B_c(\Lam)$, for some $c>0$, and for all
  sufficiently small $\eps$. Then there exists $\eps_1=\eps_1(f,\mu)$ and
  $c_0 = c_0(f,\mu)>0$ (independent of $\eps$) such that for any
  $0\leq \eps <\eps_1$, the triangular lattice $\Lam$ is a strict local
  minimizer of $\EE_{f,\mu_\eps}$ in the set of lattices $B_{c_0}(\Lam)$.
\end{proposition}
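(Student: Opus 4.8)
The plan is to treat $\EE_{f,\mu_\eps}$, for small $\eps$, as a small perturbation of the point--mass energy $E_{\hat f}$, for which $\Lam$ is already a \emph{nondegenerate} strict local minimizer. Since $\EE_{f,\mu_\eps}[L]<\infty$ for $L\in B_c(\Lam)$ by hypothesis, Proposition \ref{prp-epsfou} applies on that ball and yields
\begin{align}
  \EE_{f,\mu_\eps}[L] = E_{h_\eps}[L^*] + \hat f(0) - (f\ast\mu_\eps\ast\mu_\eps)(0), \qquad h_\eps(p):=\hat f(p)\,g_\eps(|p|)^2 ,
\end{align}
with $g_\eps=\widehat{\mu_\eps}$ the Hankel--Stieltjes transform attached to $\mu_\eps$ as in \eqref{def-g}. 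The last two terms do not depend on $L$, and $L\mapsto L^*$ is a smooth involution of $\DD$ fixing $\Lam$; hence, as noted after Proposition \ref{prp-epsfou}, it is enough to produce $\eps_1=\eps_1(f,\mu)>0$ and $c_0=c_0(f,\mu)\in(0,c]$ so that $\Lam$ is a strict local minimizer of $E_{h_\eps}$ in $B_{c_0}(\Lam)$ for every $0\le\eps<\eps_1$. The case $\eps=0$ is included: $g_0\equiv 1$, so $E_{h_0}=E_{\hat f}$, for which the statement is classical (Proposition \ref{prop-locmin} applied to $\hat f\in\FF$, by Lemma \ref{lem-ffhat}).

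To set up the perturbation I would first fix the limiting data. By Lemma \ref{lem-ffhat}, $\hat f\in\FF$, so Proposition \ref{prop-locmin} and Lemma \ref{lem-critictriang} give that $\Lam$ is a critical point of $E_{\hat f}$ with $D^2 E_{\hat f}[\Lam]=T_{\hat f}\,\Id_{\R^2}$ and $T_{\hat f}>0$; by continuity of $L\mapsto D^2 E_{\hat f}[L]$ on $\DD$ I choose $c_0\in(0,c]$ and $m>0$ with $D^2 E_{\hat f}[L]\ge m\,\Id_{\R^2}$ for all $L\in\overline{B_{c_0}(\Lam)}$. Independently, for every $\eps$ the function $h_\eps$ is rotationally symmetric with $|h_\eps(p)|\le|\hat f(p)|\le C(1+|p|)^{-2-\eta}$ (since $|g_\eps|\le 1$); using that $E_{h_\eps}$ is $C^2$ in the lattice parameters near $\Lam$ --- which already requires the Gaussian representation of the next step, since $\widehat{\mu_\eps}$ itself need not be differentiable --- and invariant under the symmetries of $\Lam$, one obtains $\nabla E_{h_\eps}[\Lam]=(0,0)$ for \emph{every} $\eps$ (cf.\ Lemma \ref{lem-critictriang}(i)); this exactness is essential, as otherwise the minimizer would sit at a lattice near $\Lam$ rather than at $\Lam$. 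Granting the convergence stated next, I would pick $\eps_1$ so that $D^2 E_{h_\eps}[L]\ge\tfrac m2\,\Id_{\R^2}$ on $\overline{B_{c_0}(\Lam)}$ for all $\eps<\eps_1$; Taylor's formula in the lattice parameters about $\Lam$ then gives, for $L\in B_{c_0}(\Lam)\setminus\{\Lam\}$,
\begin{align}
  E_{h_\eps}[L]-E_{h_\eps}[\Lam]\ \ge\ \tfrac m4\,d(L,\Lam)^2\ >\ 0 ,
\end{align}
which is the desired strict local minimality of $E_{h_\eps}$, hence of $\EE_{f,\mu_\eps}$.

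The technical core, and the step I expect to be the main obstacle, is the uniform $C^2$ convergence
\begin{align}
  \big\|D^2 E_{h_\eps}-D^2 E_{\hat f}\big\|_{C^0(\overline{B_{c_0}(\Lam)})}\ \longrightarrow\ 0 \qquad\text{as }\eps\to 0 .
\end{align}
Its source is the concentration $\mu_\eps\rightharpoonup\delta_0$, which makes $g_\eps\to 1$ locally uniformly and hence $h_\eps\to\hat f$ pointwise; the difficulty is to upgrade this to $C^2$ convergence of the lattice sums \emph{uniformly} over the fixed ball and \emph{without any moment assumption on} $\mu$. For that I would use the Gaussian decomposition $\hat f(p)=\int_0^\infty e^{-\pi^2|p|^2/t}\,t^{-1}\,d\mu_f(t)$ (finite since $f(0)=\mu_f((0,\infty))<\infty$) to write
\begin{align}
  E_{h_\eps}[L^*]=\int_0^\infty t^{-1}\Big(\sum_{p\in L^*}{\vphantom{\sum}}'\, e^{-\pi^2|p|^2/t}\,g_\eps(|p|)^2\Big)\,d\mu_f(t) ,
\end{align}
and Poisson summation (Proposition \ref{prp-poisson}) to rewrite the inner sum, up to an $L$--independent constant, as a lattice sum of $K_t\ast\mu_\eps\ast\mu_\eps$, where $\widehat{K_t}(p)=e^{-\pi^2|p|^2/t}$. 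In this representation every derivative in the lattice parameters falls on the Gaussian kernel $K_t$ and not on $\mu_\eps$, so only the smoothness and decay of $K_t$ are used; one then passes to $\eps\to 0$ by dominated convergence, using that $\mu_\eps\rightharpoonup\delta_0$ makes $K_t\ast\mu_\eps\ast\mu_\eps$ and its derivatives converge locally uniformly to those of $K_t$, and that the $t$--integral and the lattice tails are dominated --- uniformly in $\eps$ and in $L\in\overline{B_{c_0}(\Lam)}$ --- by an $\eps$--independent multiple of the finite quantity $\int_0^\infty(\theta_L(t/\pi)-1)\,d\mu_f(t)=E_f[L]$ (cf.\ the proof of Proposition \ref{prop-locmin}). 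Making all these tail and regularity estimates uniform in $\eps$ while $\mu_\eps$ may carry no finite moment is the delicate point; once it is in place, the perturbation argument of the previous paragraph finishes the proof.
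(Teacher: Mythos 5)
Your overall architecture is the same as the paper's: reduce via Proposition \ref{prp-epsfou} and the self-duality of $\Lam$ to the point energy $E_{h_\eps}$ with $h_\eps(p)=\hat f(p)g_\eps(|p|)^2$, get criticality of $\Lam$ for every $\eps$ from Lemma \ref{lem-critictriang}, deduce positivity of the Hessian for small $\eps$ by comparison with the limit $E_{\hat f}$ (Lemma \ref{lem-ffhat} plus Proposition \ref{prop-locmin}), and add a uniformity ingredient so that $c_0$ does not depend on $\eps$. Where you diverge is the mechanism for the Hessian comparison: the paper stays entirely on the Fourier side, differentiates $H_\eps=\Phi\, G_\eps^2$ explicitly using $J_0'=-J_1$, $J_1'=\tfrac12(J_0-J_2)$ and the power series of $J_0,J_1,J_2$, obtains $T_{h_\eps}=T_{\hat f}+R_\eps$ with $R_\eps\to0$ (the $\eps$ and $\eps^2$ prefactors in $(G_\eps^2)'$, $(G_\eps^2)''$ do the work), and then fixes the radius $c_0$ through a bound $|D^3 E_{h_\eps}[L]|\le C$ on $B_c(\Lam)$ uniform in $0\le\eps\le1$; your uniform convergence of $D^2E_{h_\eps}$ on a closed ball plus Taylor expansion would serve the same purpose, if it were established.

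That, however, is where there is a genuine gap: the uniform $C^2$ convergence is only asserted, and the domination you propose does not hold as stated. Differentiating the spatial representation $\sum_{x\in L}{\vphantom{\sum}}'\,(K_t\ast\mu_\eps\ast\mu_\eps)(x)$ twice in the lattice parameters produces weights growing quadratically in the lattice indices (equivalently, factors of order $t|x|^2$ and $t^2|x|^4$ against the Gaussian), so the differentiated sums are \emph{not} majorized by an $\eps$-independent multiple of $\int_0^\infty(\theta_L(t/\pi)-1)\,d\mu_f(t)$. After convolution with $\mu_\eps\ast\mu_\eps$, the decay of $D^2(K_t\ast\mu_\eps\ast\mu_\eps)$ along the lattice is governed by the tails of $\mu_\eps\ast\mu_\eps$; bounding the weighted sums uniformly in $\eps$ therefore requires moment-type information on $\mu$ (equivalently, differentiability of $\hat\mu$, i.e.\ convergence of integrals such as $\int_0^\infty s\,J_1(2\pi s\eps r)\,d\psi(s)$ appearing in \eqref{def-g}), which is precisely what you announce you will avoid, and the hypothesis $\EE_{f,\mu_\eps}[L]<\infty$ controls only the undifferentiated sum. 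The same issue already undercuts your earlier claims that $E_{h_\eps}$ is $C^2$ near $\Lam$ and that $\Lam$ is a critical point for every $\eps$, since you defer both to the very representation whose derivative estimates are missing. So the step you label the ``delicate point'' is not a technicality to be filled in later; it is the substance of the proposition, and as written your argument does not close it, whereas the paper closes it (at the single point $\Lam$, which suffices together with the uniform third-derivative bound) by the explicit Bessel-function computation of $T_{h_\eps}$.
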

\begin{proof}
By Proposition
  \ref{prp-epsfou} and the fact that $\Lam^*=\Lam$, up to rotation, we need to
  show that $\Lam$ is a strict minimizer of $E_{h_\eps}$, where
  $h_\eps(p) :=\hat{f}(p)g_\eps(|p|)^2$ and $g_\eps(r) :=g(\eps r)$ with $g$
  defined in \eqref{def-g}. We furthermore set
  $H_\eps(|x|^2) := h_{\eps}(x) = \Phi(|x|^2)G_\eps(|x|^2)^2$ with
  $\Phi(|p|^2) :=\hat{f}(p)$, $G_\eps(|p|^2) :=g_\eps(|p|)$, i.e.
\begin{align}
  G_\eps(r)=\int_0^{\infty} J_0(2\pi s \eps \sqrt{r})d\psi(s).
\end{align}
By Lemma \ref{lem-critictriang} and
the fact that $h_\eps$ is a radial function, $\Lam$ is a critical point of
$E_{h_\eps}$ in $\DD$ for any $\eps>0$ and a condition for the strict positivity
of its Hessian at $L=\Lam$ is
\begin{align}
  T_{h_\eps}:=\frac{4}{\sqrt{3}}\sum_{m,n}{\vphantom{\sum}}' n^2 H_\eps'\Big(\tfrac{2}{\sqrt{3}}(m^2+mn+n^2) \Big)+\frac{4}{3}\sum_{m,n}{\vphantom{\sum}}' n^4 H_\eps''\Big(\tfrac{2}{\sqrt{3}}(m^2+mn+n^2)\Big)>0.
\end{align} 
Since $J_0'(r)=-J_1(r)$ and $J_1'(r)=\frac{1}{2}(J_0(r)-J_2(r))$, a simple
calculation yields
\begin{align}
  (G_\eps^2)'(r) =-\frac{2\pi\eps}{\sqrt{r}}\Big(\int_0^{\infty} s J_1( \t r)d\psi(s)\Big)\Big(\int_0^{\infty} J_0(\t r) d\psi(s)\Big),
\end{align}
with the notation $\t r := 2\pi s \eps \sqrt{r}$. Furthermore,
\begin{align}
  (G_\eps^2)''(r)=&\frac{\pi \eps}{r^{3/2}}\int_0^{\infty} J_0(\t r)d\psi(s)\int_0^{\infty} s J_1(\t r) d\psi(s)  %
                    +\frac{2\pi^2 \eps^2}{r}\Big(\int_0^{\infty} s J_1(\t r) d\psi(s)   \Big)^2
  \\
                  &+\frac{\pi^2 \eps^2}{r}\int_0^{\infty} J_0(\t r)d\psi(s)\int_0^{\infty} s^2 (J_2(\t r) - J_0(\t r)) d\psi(s).
\end{align}
We also recall the series expansion of Bessel functions (see e.g. \cite[Eq. (9.1.10)]{AbraStegun})
\begin{gather}
  J_0(\t r)=1+\sum_{m=1}^{\infty} \frac{(-1)^m (\frac{\t r}2)^{2m}}{(m!)^2}, 
  J_1(\t r)=\sum_{m=1}^{\infty} \frac{(-1)^m (\frac{\t r}2)^{2m+1}}{m! (m+1)!}, 
  J_2(\t r)=\sum_{m=1}^{\infty} \frac{(-1)^m (\frac{\t r}2)^{2m+2}}{m!
   (m+2)!},
\end{gather}
\vspace{-4ex}\\
where $\t r = 2\pi s \eps \sqrt{r}$ as before. Thus, using the Leibniz rule
  on $H_\eps''=(\Phi G_\eps ^2)''$, since $\int_0^{\infty} d\psi(s)=1$ and
by substituting the expressions of $J_n(2\pi s\eps \sqrt{r})$, $n\in \{0,1,2\}$,
previously computed and summing on $\Lam$, we obtain
$T_{h_\eps}=T_{\hat{f}}+R_\eps$, where $R_\eps\to 0$ as $\eps\to 0$. By Lemma
\ref{lem-ffhat} we have $\hat{f}\in \FF$, and it follows from Proposition
\ref{prop-locmin} that $T_{\hat{f}}>0$. Therefore, there exists $\eps_1$ such
that for any $0\leq \eps <\eps_1$, $T_{h_\eps}>0$, and $\Lam$ is a local minimum
of $E_{h_\eps}$ in $\DD$, for any such $\eps$. Moreover, as all the $J_n$ are
bounded, it is straightforward to prove that for any $L\in B_c(\Lam)$ and any
$0\leq \eps\leq 1$, we have $|D^3 \EE_{h_\eps}[L]|\leq C$ for some $C$
independent of $\eps$. This concludes the proof.
\end{proof}

\subsection{Proof of Theorem \ref{thm-1}}\label{proof-thm1}

In this section, we give the proof of Theorem \ref{thm-1}, related to the case
of small $\eps$. At the end of the section, we give some remarks about the case
of large $\eps$.

\medskip

Before proving Theorem \ref{thm-1}, we show in the following result that a
minimizer $L_0\in \DD$ of $E_{h_\eps}$ is necessarily at a bounded distance from
$\Lambda$.
\begin{lemma}\label{lem-compactlattice}
  Let $f\in \FF$ and $\mu\in \PP(\R^2)$ be rotationally symmetric with respect to the origin. Suppose that
  $\EE_{f,\mu_\eps}[L]<\infty$ for any $L\in B_c(\Lam)$, for some $c>0$, and for all $\eps < \eps_0$ and some $\eps_0 > 0$ where $\mu_\eps$ is
  given by \eqref{def-mueps}. Then there exists $c_1=c_1(f,\mu)>0$ and $\eps_1=\eps_1(f,\mu)$ such that for
  any $0\leq \eps < \eps_1$ and any minimizer $L_0$ of $E_{h_\eps}$ in $\DD$,
  we have $L_0\in B_{c_1}(\Lam)$.
  \end{lemma}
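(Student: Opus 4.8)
The plan is to show that if a lattice $L_0$ is far from $\Lambda$ in the metric $d$, then its energy $E_{h_\eps}[L_0]$ must be large, while $E_{h_\eps}[\Lambda]$ stays bounded (uniformly in small $\eps$), contradicting minimality. The key point is that far from $\Lambda$ in $\DD$ means, via the parametrization $(x,y)\in D$, that either $y\to\infty$ or $y\to 0$ (the coordinate $x$ ranges in the compact $[0,\frac12]$), i.e.\ the lattice degenerates: it becomes very ``thin'' in one direction, so the shortest nonzero vector tends to $0$ (when $y\to\infty$, the vector $(1/\sqrt y,0)$ shrinks; when $y\to 0$, similarly after accounting for the $x^2+y^2\ge 1$ constraint the lattice has arbitrarily short vectors). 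Let me write $\delta(L):=\min_{p\in L\setminus\{0\}}|p|$ for the length of the shortest nonzero vector; the claim is that $\delta(L_0)\to 0$ as $d(L_0,\Lambda)\to\infty$ within $\DD$.

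\medskip

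First I would record the uniform upper bound: by Proposition \ref{prp-epsfou}, $\EE_{f,\mu_\eps}[\Lambda]=E_{h_\eps}[\Lambda^*]+\hat f(0)-(f\ast\mu_\eps\ast\mu_\eps)(0)$, and since $\Lambda^*=\Lambda$ up to rotation and $\EE_{f,\mu_\eps}[\Lambda]<\infty$ for small $\eps$ by hypothesis, together with $E_{h_\eps}[\Lambda]=\EE_{f,\mu_\eps}[\Lambda]-\hat f(0)+(f\ast\mu_\eps\ast\mu_\eps)(0)$; one checks $|(f\ast\mu_\eps\ast\mu_\eps)(0)|\le\|f\|_{L^\infty}\le C$ and $|\hat f(0)|\le C$ from the class $\FF$ decay bound, and using the Fourier representation one can in fact bound $E_{h_\eps}[\Lambda]\le C_0$ uniformly for $0\le\eps\le\eps_0$ (since $|h_\eps(p)|\le|\hat f(p)|\,g(\eps|p|)^2\le|\hat f(p)|$, as $g(0)=\psi(\R_+)\cdot J_0(0)=1$ and $|g|\le 1$, so $E_{h_\eps}[\Lambda]\le E_{|\hat f|}[\Lambda]=:C_0<\infty$ by the $\FF$-decay of $\hat f$).

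\medskip

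Next I would produce the lower bound for degenerate lattices. Write $E_{h_\eps}[L]=\sum_{p\in L\setminus\{0\}}'\hat f(p)g(\eps|p|)^2$. The strategy is to split $\hat f=\hat f^+-\hat f^-$ or, more cleanly, to use that $h_\eps=\Phi G_\eps^2$ where $\Phi=\hat f$ need not have a sign — so instead I would use the representation from the proof of Proposition \ref{prop-locmin}: for $f\in\FF$, $\hat f\in\FF$ by Lemma \ref{lem-ffhat}, so $\hat f(p)=\int_0^\infty e^{-|p|^2 s}\,d\mu_{\hat f}(s)$ with $\mu_{\hat f}\ge 0$, hence $\hat f\ge 0$ pointwise and in fact $h_\eps(p)=g(\eps|p|)^2\int_0^\infty e^{-|p|^2 s}d\mu_{\hat f}(s)\ge 0$. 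Therefore $E_{h_\eps}[L]\ge$ the single term from a shortest vector, but that is bounded, not large; so I would instead keep $h_\eps$ nonnegative and bound $E_{h_\eps}[L]\ge\sum_{p\in L^*\setminus\{0\},\,|p|\le 1}h_\eps(p)$ — wait, more useful: since the energy is $\EE_{f,\mu_\eps}[L]=E_{h_\eps}[L^*]+\hat f(0)-(f\ast\mu_\eps\ast\mu_\eps)(0)$ and since each contribution $\hat f(p)g(\eps|p|)^2\ge 0$, one should bound $E_{h_\eps}[L^*]$ from below by the number of lattice points of $L^*$ in a fixed small ball times a lower bound for $h_\eps$ there, and a degenerate $L$ has a dual $L^*$ which is also degenerate with many points in any fixed ball. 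Concretely, if $\delta(L)=\delta\to 0$ then $L^*$ contains a vector of length $\sim 1/\delta\to\infty$ but more relevantly (by unit density) $L$ itself has $\gtrsim 1/\delta$ many points in the unit ball; since $h_\eps\ge 0$ and for small $\eps$ and $|p|\le 1$ we have $g(\eps|p|)^2\ge\frac12$ and $\hat f(p)\ge\hat f(1)>0$ (as $\hat f$ is positive and continuous, being a Laplace transform of a nonzero measure — here one needs $\mu_{\hat f}\ne 0$, which holds unless $f\equiv 0$), we get $E_{h_\eps}[L]\ge c\cdot\#\{p\in L:|p|\le 1\}\to\infty$ as $\delta(L)\to 0$. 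Combining with the uniform upper bound $E_{h_\eps}[\Lambda]\le C_0$ forces any minimizer $L_0$ to satisfy $\delta(L_0)\ge\delta_*$ for some $\delta_*>0$, and — using that $\delta(L)$ is continuous and proper on $\DD$, i.e.\ $\{L:\delta(L)\ge\delta_*\}$ is compact (bounded in the $(x,y)$-chart: $y$ bounded above since $1/\sqrt y\ge\delta_*$, and bounded below since then $\sqrt y\ge\delta_*$ as $(x,\sqrt y)\cdot$ or rather $(x/\sqrt y,\sqrt y)$ must be long too, using $x^2+y^2\ge 1$) — we conclude $L_0\in B_{c_1}(\Lambda)$ for some $c_1=c_1(f,\mu)$.

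\medskip

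The main obstacle I anticipate is the last compactness/properness step done carefully in the specific metric \eqref{metric}: one must verify that $d(L,\Lambda)\to\infty$ within $\DD$ really does force degeneration $\delta(L)\to 0$ (equivalently, that the map $(x,y)\mapsto\OL L(x,y)$ pushes the metric $d$ to an exhaustion of $D$ by compacts), and that the finiteness hypothesis $\EE_{f,\mu_\eps}[L]<\infty$ is only assumed on $B_c(\Lambda)$ whereas we need the energy comparison on all of $\DD$ — this is fine because $E_{h_\eps}[L^*]$ is an absolutely convergent sum of nonnegative terms for every $L$ (by the $\FF$-decay of $\hat f$ and $|g|\le 1$), so $\EE_{f,\mu_\eps}[L]$ is well-defined in $[0,\infty]$ for all $L$ and equals $+\infty$ exactly when the point-sum diverges, which never happens here; hence the minimization of $E_{h_\eps}$ over $\DD$ makes sense unconditionally and the lower bound argument applies. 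The other mild technical point is the uniformity in $\eps$: all constants come from $\FF$-bounds on $\hat f$ and the bound $|g|\le 1$, none of which involve $\eps$, so $\eps_1$ and $c_1$ depend only on $f$ and $\mu$.
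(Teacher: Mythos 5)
Your argument is correct, but it takes a genuinely different (and shorter) route than the paper. The paper does not exploit that each summand of $E_{h_\eps}$ is nonnegative; instead it works with the Bessel representation of $g$ from \eqref{def-g}, splits according to where $J_0\geq \tfrac12$ (small argument) and $J_0\geq -\tfrac12$ (large argument), and bounds $E_{h_\eps}[L]\geq \tfrac12(I_1-I_2)$, then shows $I_1\to\infty$ as $y\to\infty$ using only the $n=0$ row of the lattice, while $I_2$ is bounded uniformly via the quadratic-form estimate \eqref{dazuu} and a comparison with a convergent integral. You instead observe that $\hat f\geq 0$ (Lemma \ref{lem-ffhat}) and the $g$-factor enters squared, so $h_\eps\geq 0$ termwise; then a degenerate lattice ($y\to\infty$ in the chart, which by \eqref{metric} and $x\in[0,\tfrac12]$, $y\geq\tfrac{\sqrt3}{2}$ is the only way to leave every ball around $\Lam$) contributes $\sim\sqrt y$ points in the unit ball, each worth at least $\tfrac12\hat f$ evaluated at radius $1$ once $\eps$ is small enough that $g^2\geq\tfrac12$ near $0$, while $E_{h_\eps}[\Lam]\leq E_{|\hat f|}[\Lam]$ uniformly in $\eps$; this kills minimizers far from $\Lam$. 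Your version buys brevity (no two-dimensional sum estimates) and automatic uniformity in $\eps$; the paper's version is robust to the sign of the $g$-factor (it controls even an unsquared Bessel factor, as written in its display), at the price of the $I_1$--$I_2$ bookkeeping. Minor blemishes in your write-up, none fatal: both proofs implicitly need $f\not\equiv 0$ (otherwise every lattice minimizes and the statement fails, and the paper's $\liminf I_1=\infty$ also breaks); your aside about $y\to 0$ is vacuous since $x^2+y^2\geq 1$ and $x\leq\tfrac12$ force $y\geq\tfrac{\sqrt3}{2}$; and the mid-paragraph detour through $L^*$ is unnecessary since the lemma concerns minimizers of $E_{h_\eps}$ in $\DD$ directly, which is also where your final count-versus-upper-bound comparison correctly takes place.
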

  \begin{proof}
    By Proposition \ref{prp-epsfou}, it suffices to prove the equivalent
      statement for the energy
      \begin{align}
        E_{h_\eps}[L]= \sum_{p\in L}{\vphantom{\sum}}' h_\eps(p) = \sum_{p\in L}{\vphantom{\sum}}' \Phi(|p|^2)\int_0^{\infty} J_0(2\pi s\eps |p|)d\psi(s),
      \end{align}
      where $\Phi(|x|^2)=\hat{f}(x)$ for any $x\in \R^2$.  Note that
      $\mu_{\hat f} \geq 0$ since $f \in \FF$ and hence $\hat f \in \FF$ by
      Lemma \ref{lem-ffhat}.  We recall that, by definition of $\FF$ and Lemma
      \ref{lem-ffhat}, $\Phi$ is a completely monotone function -- and then
      decreasing -- and $\Phi(|x|^2)=\hat{f}(x)\leq C(1+|x|)^{-2-\eta}$ for some
      $\eta>0$. We also note that $\psi([0,\infty))=1$.  Let
      $L=\bar{L}(x,y)\in \DD$ with $(x,y)\in D$ such that
      $E_{h_\eps}[L]<\infty$. Then we have, for any
      $p=m (\frac{1}{\sqrt{y}},0)+n(\frac{x}{\sqrt{y}},\sqrt{y})\in L$ for some
      $(m,n)\in \Z^2$, $|p|^2=\frac{(m+xn)^2}{y}+yn^2$.  We will use that
      $J_0(r) \geq \frac{1}{2}$ for $r \leq 1$ and $J_0(r) \geq - \frac{1}{2}$
      for $r \geq 1$ (which can be checked easily).  Hence, we get, for any
      $0\leq \eps< \eps_0$,
      \begin{align}
     E_{h_\eps}[L] %
      &\geq \frac{1}{2}\sum_{p\in L}{\vphantom{\sum}}' \Phi(|p|^2)   \int_{\{s \leq \frac{1}{2\pi \eps |p|} \}} d\psi(s)-\frac{1}{2}\sum_{p\in L}{\vphantom{\sum}}'\Phi(|p|^2)  \int_{\{s \geq \frac{1}{2\pi \eps |p|} \}} d\psi(s) \\
      & \geq \frac 12 \Big(I_1(\eps;L) - I_2(\eps;L) \Big),
      \end{align}
      where for $L = \OL L(x,y)$ we write
      \begin{align}
        I_1(\eps;L):=\sum_{p\in L}{\vphantom{\sum}}' \Phi(|p|^2)   \psi\Big(\{s\leq \frac{1}{2\pi \eps |p|}  \}\Big), %
        && I_2(\eps;L):=\sum_{p\in L}{\vphantom{\sum}}' \Phi(|p|^2)  \psi\Big(\{s\geq \frac{1}{2\pi \eps |p|}  \}\Big).
      \end{align}
      Let $\eps_2$ be such that
$\psi(\{s\leq \frac{1}{2\pi \eps_2}\}) \geq \frac 34$, and we set $\eps_1=\min(\eps_2,\eps_0)$. Let $I_1 = \sum_p I_1^p$ and $I_2=\sum_p I_2^p$. We first note that
      $I_1^p \geq 0$ and $I_2^p \geq 0$ for all $p \in L$. Since
      $\Phi(\frac 1z) \to \NP{f}{1}$ and $\psi(z) \to 1$ for $z \to \infty$, we
      have $I_1(\eps;L)\geq I_1(\eps_1;L)$ and
      \begin{align} \label{I1-lim} %
        \liminf_{y\to \infty} I_1(\eps_1;L) %
        &\geq \sum_{m\in \Z}{\vphantom{\sum}}' \liminf_{y\to \infty}\Phi\left( \frac{m^2}{y} \right)\psi\Big(\{s\leq
          \frac{\sqrt{y}}{2\pi \eps_1 |m|} \}\Big)=\infty,
\end{align}
uniformly in $\eps$ for all $0\leq \eps < \eps_1$.  By definition of $\eps_2$ we have, for any $0\leq \eps <\eps_1$,
$I_1^p \geq 2 I_2^p$ and hence $I_1^p - I_2^p > \frac 12 I_1^p$ if
$|p| \leq 1$. Together with \eqref{I1-lim} this implies that
$I_1 - \sum_{|p| \leq 1} I_2^p \to \infty$ as $y\to \infty$. It remains to
estimate $\sum_{|p| \geq 1} I_2^p$. We first consider the terms with $n = 0$,
i.e. $p = (\frac{m}{\sqrt{y}},0)$ with $|p|^2 = \frac{m^2}y \geq 1$.  Since
$\psi \leq 1$, we have for any $y\geq 1$ and any $0\leq \eps<\eps_1$,
\begin{align}
  \sum_{|p| > 1, p = (\frac{m}{\sqrt{y}},0)} I_2^p(\eps,L) \leq \sum_{|m| > 1, m \in \Z} \Phi(m^2) %
  \leq \sum_{|m|>1,m\in \Z}\frac{1}{(1+|m|)^{2+\eta}} \leq C.
\end{align}
We turn to the estimate of the $n\neq 0$ terms. For any $y\geq 1$,
$x\in \left[0,\frac{1}{2}\right]$ and $(m,n)\in \Z\times \Z\BS \{0\}$ we have
$\frac{2xmn}{y}\geq -\frac{x}{y}(m^2+n^2)$, $x^2-x\geq -\frac{1}{4}$,
$y-\frac{1}{4y}> \frac{y}{2}$ and
\begin{align}  %
  \frac{(m+xn)^2}{y}+yn^2 %
  &\geq \frac{(1-x)m^2}{y}+\frac{(x^2-x+y^2)n^2}{y}\geq
  \frac{m^2}{2y}+\left( y-\frac{1}{4y} \right)n^2 \notag \\
  & >\frac{m^2}{2y}+ \frac{yn^2}{2}. \label{dazuu}
  \end{align}
  Furthermore, we recall that $\Phi$ is decreasing. Therefore, using
  $\psi\leq 1$ we get, for any $y\geq 1$, any $x\in \left[0,\frac{1}{2} \right]$
  and any $0\leq \eps < \eps_1$,
 \begin{align}
   \hspace{6ex} & \hspace{-6ex} %
                  \sum_{|p| > 1, n \neq 0} I_2^p(\eps,L)\leq \sum_{m,n : n \neq 0}\Phi\Big( \frac{(m+xn)^2}{y}+yn^2 \Big)\lupref{dazuu}\leq \sum_{m,n : n \neq 0}\Phi\Big(\frac{1}{2}\Big( \frac{m^2}{y}+yn^2 \Big)\Big)\\
                                         &\leq C\iint_{\R^2}\frac{dudv}{\left(1+\sqrt{\frac{u^2}{2y}+\frac{y}{2}v^2}  \right)^{2+\eta}}= 2C\iint_{\R^2}\frac{dwdz}{\left(1+\sqrt{w^2+z^2}  \right)^{2+\eta}} < \infty,
\end{align}
by the change of variables $w=\frac{u}{\sqrt{2y}}$ and
$z=\frac{v\sqrt{y}}{\sqrt{2}}$. We finally obtain that, for any $0\leq \eps < \eps_1$, any $y\geq 1$ and any $0\leq x\leq 1/2$,
$E_{h_\eps}[L]\geq C(y)$ for some function $C$ that depends only on $f,\mu,$ and
$\eps_1$, but is independent of $\eps \in [0,\eps_1)$, with $C(y) \to \infty$ for
$y \to \infty$. Therefore, there exists $c_1>0$ such that $L_0\in B_{c_1}(\Lam)$
for such minimizer $L_0$.
\end{proof}

\begin{proof}[Proof of Theorem \ref{thm-1}]
  By Proposition \ref{prp-epsfou},
  proving the optimality of $\Lam$ in $\DD$ for $\EE_{f,\mu_\eps}$ is
  equivalent with proving the same for
  \begin{align}
    E_{h_\eps}[L]= \sum_{p\in L}{\vphantom{\sum}}' h_\eps(p) =  \sum_{p\in L}{\vphantom{\sum}}' \hat{f}(p)g_\eps(|p|)^2,
  \end{align}
  where $g_\eps(r)=g(\eps r)$, $g$ is defined by \eqref{def-g} and
  $h_\eps(p) =\hat{f}(p)g_\eps(|p|)^2$. We note that in general $H_\eps$,
  where $H_\eps(|p|^2) := h_\eps(p)$, is not completely monotone so that the
  optimality of the triangular lattice is not clear. However, in the limit
  $\eps \to 0$, $H_\eps$ approaches the completely monotone function $\Phi$,
  defined by $\Phi(|x|^2) := \hat f(x)$. By Lemma \ref{lem-ffhat}, there is
  hence a nonegative Radon measure $\mu_{\hat{f}}$ such that
  $\Phi = \LL \mu_{\hat{f}}$.

By Lemma \ref{lem-compactlattice}, if $L_0$ is a minimizer of $E_{h_\eps}$ for small enough $\eps$, then $L_0\in B_{c_1}(\Lam)$ for some $c_1>0$. In the sequel, we will consider only lattices $L\in B_{c_1}(\Lam)$ such
  that $E_{h_\eps}[L]<\infty$ (otherwise $L$ is not a minimizer).

  We hence approximate $H_\eps$ by a sequence of
  completely monotone functions $\t H_\eps$. More precisely, we construct
  $\t H_\eps$ such that $E_{\t h_\eps}[L]$, defined in \eqref{E-point} for
  $\t h_\eps(p) :=\tilde{H}_\eps(|p|^2)$ satisfies for any
  $L \in B_{c_1}(\Lam)$ 
  \begin{align}
    \left| E_{\t h_\eps}[L] - E_{h_\eps}[L] \right| %
    &\leq C \eps^2 &&\text{for $\eps \to 0$}, \label{cl-1} \\
    E_{\t h_\eps}[L] - E_{\t h_\eps}[\Lam] &\geq C d(L,\Lam)^2. \label{cl-2}
  \end{align}
  We claim that the assertion of the proposition follows from
  \eqref{cl-1}--\eqref{cl-2}. Indeed, if \eqref{cl-1}--\eqref{cl-2} hold, then
  we can estimate for any $L \in B_{c_1}(\Lam)$ ,
  \begin{align}
    E_{h_\eps}[L] - E_{h_\eps}[\Lam] %
    &= E_{h_\eps - \t h_\eps}[L] + \big( E_{\t h_\eps}[L] - E_{\t h_\eps}[\Lam] \big) - E_{h_\eps - \t h_\eps}[\Lam] \\
    &\upref{cl-2}\geq C d(L,\Lam)^2 + E_{h_\eps - \t h_\eps}[L] -  E_{h_\eps - \t h_\eps}[\Lam]  \\
    &\upref{cl-1}\geq C d(L,\Lam)^2 - 2C \eps^2.
  \end{align}
  For any lattice $L$ with
  $E_{h_\eps}[L] \leq E_{h_\eps}[\Lam]$, this implies $d(L,\Lam) \leq C
  \eps$. We claim
  \begin{align} \label{loc-min} %
    E_{h_\eps}[L] \geq E_{h_\eps}[\Lam] &&\text{for all $L\in B_{c_0}(\Lam)$ and all
                                                 $0\leq \eps <\eps_0$,}
  \end{align}
  for some $\eps_0 > 0$ and some $c_0 > 0$. This follows from Proposition
  \ref{prp-localmin}.  Assuming that \eqref{loc-min} holds, we get the statement
  of the theorem.

  \medskip

It remains to show
    that \eqref{cl-1}--\eqref{cl-2} hold.  In order to construct $\t H_\eps$, we
    first approximate $g_\eps^2$ by some function $\t g_\eps^2$. We choose
    $\t \nu \in C_c^\infty((\frac 12, 2))$ such that $\t \nu \geq 0$ and
  $\int_0^\infty \t \nu(t) \ dt = 1$. With the definition
  \begin{align}
    \t g(r)^2 := \LL \t \nu(r) %
    = \int_0^\infty e^{-rt} \t \nu(t) \ dt
  \end{align}
  and $\t g_\eps(r) = \t g(\eps r)$ we then have $\t g^2 \in \FF$,
  $\tilde{g}(0)^2 = 1$ and $\| (\t g^2)' \|_{\infty} \leq C$.  Since also
  $\|(g^2)'\|_{\infty} \leq C$ and $g(0)^2 = 1$, we hence get, for any $r> 0$,
\begin{align} \label{g-close} %
  |g_\eps(r)^2 - \t g_\eps(r)^2| %
  = |g(\eps r)^2 - \t g(\eps r)^2| \leq C \min \{
  \eps^2 r, 1 \}.
  \end{align}
 In terms of
  $\t \nu_\eps(t) := \frac{1}{\eps}\tilde{\nu}\big( \frac{t}{\eps} \big)$, we
  get
  \begin{align}
    \t g_\eps(r)^2 %
    = \tilde{g}(\eps r)^2
    = \int_0^\infty e^{-\eps rt} \t \nu(t) \ dt %
    = \int_0^\infty e^{-rt} \t \nu_\eps(t) \ dt. %
  \end{align}
  We extend $\t \nu$ and $\mu_{\t f}$ by $0$ to measures on $\R$ and define
  \begin{align}
    \t \mu_\eps(t) %
    := (\mu_{\hat{f}}* \t \nu_\eps)(t) = \int_0^\infty \tilde{\nu}_\eps(t-s) d\mu_{\hat{f}}(s) &&\text{for $t > 0$.}
  \end{align}
  In particular, $\mu_{\hat f} \geq 0$. With $u=t-s$ and since
    $\supp(\t \nu) \in \left( \frac{1}{2},2 \right)$ we calculate
  \begin{align}
  \LL \t \mu_\eps(r)&=\int_0^\infty \int_0^\infty  e^{-rt} \tilde{\nu}_\eps(t-s) d\mu_{\hat{f}}(s) dt %
  = \int_0^{\infty}\left( \int_{-s}^{\infty} e^{-r(s+u)} \tilde{\nu_\eps}(u)d u \right)d\mu_{\hat{f}}(s)\\
  &=\int_0^{\infty} e^{-rs} \left( \int_{0}^{\infty} e^{-ru} \tilde{\nu_\eps}(u)d u  \right)d\mu_{\hat{f}}(s) %
    =  \tilde{g}_\eps(r)^2 \int_0^{\infty} e^{-rs}  d\mu_{\hat{f}}(s) \\
    &= \Phi(r) \tilde{g}_\eps(r)^2=: \tilde{H}_\eps(r).
  \end{align}
  In view of \eqref{g-close}, we then get, for any $p \neq 0$,
  \begin{align} \label{g-close2} %
    |H_\eps(|p|^2) - \t H_\eps(|p|^2)| %
    = |\hat f(p)| |g_\eps(|p|)^2 - \t g_\eps(|p|)^2| \leq C |\hat f(p)|
    \min \{ \eps^2 |p|, 1 \}.
  \end{align}
This implies \eqref{cl-1} by the dominated
    convergence theorem since $\hat{f} \in L^1(\R^2)$.

  \medskip

  We turn to the proof of \eqref{cl-2}. We note that, since
  $\t H_\eps=\LL \tilde{\mu}_\eps$,
  \begin{align}
    E_{\t h_\eps}[L] - E_{\t h_\eps}[\Lam] = \int_0^\infty \big[\theta_L(\pi t) - \theta_{\Lam}(\pi t) \big] \t \mu_\eps(t) \ dt,
  \end{align}
  where $\theta_L$ is the lattice theta function given by
  \eqref{def-theta}. From the construction, we also have $\t \nu_\eps \wto \d_0$
  and hence $\t \mu_\eps = \mu_{\hat{f}} *\t \nu_\eps \wto \mu_{\hat{f}}$ in the
  sense of measures for $\eps \to 0$. Note that the integrand and the measure
  are nonnegative by optimality of $\Lam$ in $\DD$ for
  $L\mapsto \theta_L(\alpha)$ for any $\alpha>0$ (see \cite{Mont}) and since $\tilde{\mu}_\eps \geq 0$.  Hence, there is $\delta > 0$ (depending
  on $f$ and $\mu$ but independent of $\eps$) such that
  \begin{align}
    E_{\t h_\eps}[L] - E_{\t h_\eps}[\Lam] \geq \frac 12 \int_\delta^{\frac 1\delta} \big[\theta_L(\pi t) - \theta_{\Lam}(\pi t) \big] d\mu_{\hat f}(t).
  \end{align}
  By Proposition \ref{prop-locmin} and continuity of
  $(t,L)\mapsto \theta_{L}(\pi t)$, for any $\delta > 0$ we have
  $\theta_L(\pi t) - \theta_{\Lam}(\pi t) \geq c d(L,\Lam)^2$ for all
  $t \in [\delta,\frac 1\delta]$ and $L\in B_{c_1}(\Lam)$. Then \eqref{cl-2}
  follows from the estimate
  \begin{align}
    E_{\t h_\eps}[L] - E_{\t h_\eps}[\Lam] %
    \geq C d(L,\Lam)^2 \int_{\delta}^{\frac 1\delta} d \mu_{\hat f}(t)
    \geq C d(L,\Lam)^2.
  \end{align}
\end{proof}
\begin{remark}[Higher dimension]\label{rem-highdim}
  The two important ingredients here are the strict local minimality of $\Lam$
  for $\mathcal{E}_{h_\eps}$ combined with the global optimality of $\Lam$ for
  the theta function $L\mapsto \theta_L(\alpha)$ for any $\alpha>0$. The result
  of Proposition \ref{prp-localmin} can be easily generalized to dimensions
  $d\in \{4,8,24\}$, for $D_4, E_8$ and the Leech lattice, using the strict
  local minimality of these lattices for the $d$-dimensional theta function
  $L\mapsto \theta_L(\alpha)$, for any $\alpha>0$, proved in
  \cite[Thm. 4.6]{Coulangeon:2010uq} and the fact that the Fourier transform of
  a radial measure on $\R^d$ is (see e.g. \cite[Section 2]{ConnetSchwartz})
  \begin{align}
    \hat{\mu}(p)=\frac{2^{\frac{d}{2}-1}\Gamma\left(\frac{d}{2}\right)}{|p|^{\frac{d}{2}-1}}\int_0^{\infty}J_{\frac{d}{2}-1}(2\pi s |p|)s^{1-\frac{d}{2}}d\psi(s),
  \end{align}
  where $\psi$ is again the Lebesgue-Stieltjes measure of $t\mapsto
  \mu(B_t)$. In three dimensions, the problem is more involved since the local
  minimizers of the lattice theta function among Bravais lattices of unit
  density then depend on $\alpha$ (see \cite[Thm. 1.7]{BeterminPetrache}). Furthermore, some very recent results \cite{Viazovska,CKMRV} about the
  best packing in dimensions $8$ and $24$ have shown the efficiency of
  Cohn-Elkies linear programming bounds for sphere packing \cite{CohnElkies}. According to
  \cite{CohnKumar}, the next step should be the proof of the universality of
  $E_8$ and the Leech lattice, i.e. their global optimality among periodic
  lattices (not only Bravais lattices) for $L\mapsto \theta_L(\alpha)$, for any
  $\alpha>0$. Consequently, Theorem \ref{thm-1} would be true in these
  dimensions, as well as Theorem \ref{thm-2}.
\end{remark}
\begin{remark}[Numerical observations for large $\eps$]\label{rem-epslarge} %
  Our result in Theorem \ref{thm-1} is concerned with the regime of sufficiently
  concentrated masses. We note that Lemma \ref{lem-critictriang} combining with
  the fact that $h_\eps$ is radial also shows that radial mass distributions on
  $\Lam$ are critical points of the energy for any $\eps > 0$.  For large values
  of $\eps$, numerical experiments suggest that we have a periodic alternation
  of local maximality and local minimality in terms of $\eps > 0$ as illustrated
  in Figure \ref{fig-Teps} for the particular case where
    $f(x)=e^{-\pi |x|^2}$ and $\mu_\eps=\eps^{-2}\chi_{B_\eps}$ on $B_\eps$. The plot shows the numerically computed value of
  $T_{h_\eps} = D^2 \EE_{h_\eps}[\Lam]$ as a function of $\eps$. We also note
  $\Lam$ is a local minimizer for $0\leq\eps \leq \eps_0$, with
  $\eps_0\approx 0.55$.
\end{remark}
\begin{figure}[!h]
  \centering
  \includegraphics[width=10cm]{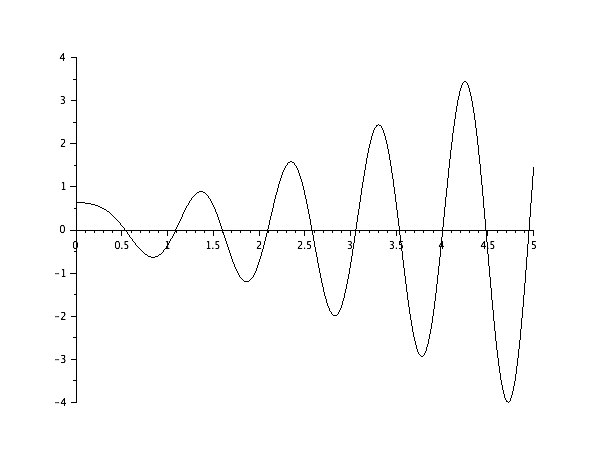}
  \caption{The plot shows the values of $T_{h_\eps}$ for
    $\mu_\eps := \eps^{-2} \chi_{B_\eps}$ and $f =e^{-\pi |x|^2}$ and for
    $\eps \in (0, 5)$, indicating if the triangular lattice is stable
    ($T_{h_\eps} > 0$) or unstable.}
  \label{fig-Teps}
\end{figure}

\subsection{Proof of Theorem \ref{thm-2}}\label{subsec-proofth1}

Let $f\in \FF$ and let $\mu$ be written as $d\mu(x) = \rho(x) dx$ for some
$\rho$ satisfying \eqref{first-con2}. With a similar argument as in the proof of Proposition
\ref{prp-epsfou}, proving that $\EE_{f,\mu}$ is minimized by $\Lam$ in $\DD$
is equivalent with proving the same for
\begin{align}
  E_{h}[L]:=\sum_{p\in L}{\vphantom{\sum}}' \hat{f}(p) \hat{\rho}^2(p).
\end{align}
  
By Lemma \ref{lem-ffhat} $\hat{f},\hat{\rho} \in \FF$. Hence, $\Phi,G$ defined by $\Phi(|p|^2) := \hat f(p)$ and
$G(|p|^2) := \hat \rho(p)$ are completely monotone. Since the product of two
completely monotone functions is again completely monotone, the function
$H =\Phi G^2$ is completely monotone. It follows (see e.g. \cite[Prop
3.1]{BetTheta15}) that the triangular lattice $\Lam$ is the unique minimizer in
$\DD$, up to rotation, of
\begin{align}
  E_{h}[L] =\sum_{p\in L}{\vphantom{\sum}}' H(|p|^2) = \sum_{p\in L}{\vphantom{\sum}}' \hat{f}(p) \hat{\rho}(p)^2.
\end{align}


\paragraph{Acknowledgement.} LB is grateful for the support of the Mathematics
Center Heidelberg (MATCH) during his stay in Heidelberg. He also acknowledges support
from ERC advanced grant Mathematics of the Structure of Matter (project
No. 321029) and from VILLUM FONDEN via the QMATH Centre of Excellence (grant
No. 10059). HK is grateful about support from DFG grant 392124319. The
authors also thank the referees for their interesting suggestions and comments.

\renewcommand{\em}[1]{\it{#1}}

 {\small
   \bibliographystyle{plain} %
   \bibliography{Diffusecharges}}




\end{document}